\newtheorem{theorem}{Theorem}
\newtheorem{definition}{Definition}
\newtheorem{proposition}{Proposition}
\newtheorem{remark}{Remark}
\DeclareSymbolFont{cyrletters}{OT2}{wncyr}{m}{n}
\DeclareMathSymbol{\Sha}{\mathalpha}{cyrletters}{"58}
\DeclareMathAlphabet{\mathscr}{OT1}{pzc}{m}{it}
\title{\LARGE \bf
Harmonic Pole Placement
}
\author{Pierre Riedinger and Jamal Daafouz
\thanks{This work is supported by HANDY project ANR-18-CE40-0010-02}
\thanks{P. Riedinger and J. Daafouz are with Universit\'e de Lorraine, CNRS-CRAN UMR 7039, 2, avenue de la for\^et de Haye, 54516 Vandoeuvre-l\`es-Nancy Cedex, France.
    {\tt\small Pierre.Riedinger@univ-lorraine.fr, Jamal.Daafouz@univ-lorraine.fr }}%
}
\begin{document}

\maketitle
\thispagestyle{empty}
\pagestyle{empty}

\begin{abstract}
In this paper, we propose a method to design state feedback harmonic control laws that assign the closed loop poles of a linear harmonic model to some desired locations. The procedure is based on the solution of an infinite-dimensional harmonic Sylvester equation under an invertibility constraint. We provide a sufficient condition to ensure this invertibility and show how this infinite-dimensional Sylvester equation can be solved up to an arbitrary small error. The results are illustrated on an unstable linear periodic system. We also provide a counter-example to illustrate the fact that, unlike the classical finite dimensional case, the solution of the Sylvester equation may not be invertible in the infinite dimensional case even if an observability condition is satisfied.
\end{abstract}

\section{Introduction}
 Harmonic modeling and control is a topic of theoretical and practical interest in many application domains such as energy management or embedded systems to mention few \cite{Farkas,Bolzern,Sanders,Montagnier,Sinha,Wereley_1990,Demiray,Almer,Zhou2011,Zhou2008,Mattavelli,Chavez,Almer2}. 
In a recent paper \cite{Blin}, a unified and coherent mathematical framework for harmonic modelling and control has been proposed. Basically, the harmonic modeling of a periodic system leads to an equivalent time invariant model of infinite dimension whose states (also called phasors) are the coefficients obtained by applying a sliding Fourier decomposition. One of the main results of \cite{Blin} established a strict equivalence between these two models. In this framework, the analysis and design are considerably simplified since all the methods established for time-invariant systems can be a priori applied. \\

Recently, results related to spectral properties of the harmonic state operator, explicit Floquet factorization and practical solutions of harmonic Lyapunov and Riccati equations have been established in \cite{Pierre2022}. Here, we exploit these results to address the problem of designing harmonic pole placement based control laws for Linear Time Periodic (LTP) systems. 
This question has been partially addressed in the past \cite{Kabamba} but without using an harmonic framework and the proposed approach was dedicated to Linear Time Invariant (LTI) systems controlled using a periodic control. To our knowledge, this is the first time that the harmonic pole placement problem is explicitely studied in its general formulation.

This paper is organized as follows. In Section II, we start with some prelimaries concerning harmonic modeling and the problem formulation. Section III is devoted to the main result which is a procedure for harmonic pole placement design. We show that the pole placement can be achieved by solving a harmonic Sylvester equation provided that the resulting solution is invertible. A sufficient condition is proposed to ensure this invertibility. We conclude this section by showing how, in practice, the Sylvester equation of the infinite dimensional problem can be solved up to an arbitrary error. This point is of importance since a naive approach consisting to apply a truncation on the infinite dimensional equation leads to erroneous numerical results as shown in \cite{Pierre2022}. A complete case study is given in section IV where an unstable LTP system is used to illustrate the results presented in the paper. Moreover, a counter-example is given to show that contrary to the finite dimensional case, the solution of the Sylvester equation may not be  invertible in the infinite dimensional case even if an observability condition is satisfied.

{\bf Notation: }The transpose of a matrix $A$ is denoted $A'$ and $A^*$ denotes its complex conjugate transpose $A^*=\bar A'$. The $n$-dimensional identity matrix is denoted $Id_n$. The infinite identity matrix is denoted $\mathcal{I}$. $A \otimes B$ is the Kronecker product of two matrices $A$ and $B$. $L^{p}$ (resp. $\ell^p$) denotes the Lebesgues spaces of $p-$integrable functions (resp. $p-$summable sequences) for $1\leq p\leq\infty$. $L_{loc}^{p}$ is the set of locally $p-$integrable functions i.e. on any compact set. The notation $f(t)=g(t)\ a.e.$ means almost everywhere in $t$ or for almost every $t$. We denote by $col(X)$ the vectorization of a matrix $X$, formed by stacking the columns of $X$ into a single column vector. Finally, $<\cdot,\cdot>$ refers to the scalar product in $\ell^2$.
\section{Preliminaries and problem formulation}


\addtolength{\textheight}{-3cm}  

We first start be recalling the definition of the sliding Fourier decomposition over a 
window of length $T$ and the so-called "Coincidence Condition" introduced in \cite{Blin}. 

\begin{definition}The sliding Fourier decomposition over a window of length $T$ from $ L^{2}_{loc}(\mathbb{R},\mathbb{C}^n)$ to $L^{\infty}_{loc}(\mathbb{R},\ell^2(\mathbb{C}^n))$ is defined by:
	$$X:=\mathcal{F}(x)$$
	where the time-varying infinite sequence $X$ is defined by:
	\begin{equation}
		t\mapsto X(t):=(\mathcal{F}(x_1)(t), \cdots,\mathcal{F}(x_n)(t))\label{hs}\end{equation}
	and where for $i:=1,\cdots,n$, the vector $\mathcal{F}(x_i):=(\cdots, X_{i,-1}, X_{i,0}, X_{i,1},\cdots)$, has infinite components $X_{i,k} $, $k\in \mathbb{Z}$ satisfying:
	$X_{i,k}(t):=\frac{1}{T}\int_{t-T}^t x_i(\tau)e^{-j\omega k \tau}d\tau.
	$
	The vector $X_k:=(X_{1,k}, \cdots, X_{n,k})$ is called the $k-$th phasor of $X$. \end{definition}
In the sequel, to distinguish a matrix function $P(\cdot)$ and its sliding Fourier decomposition $\mathcal{F}(P)$, we use the notation ${\bf P}:=\mathcal{F}(P)$ and $P(t)$ instead of $P(\cdot)$.

\begin{definition}\label{H} We say that $X$ belongs to $H$ if $X$ is an absolutely continuous function (i.e $X\in C^a(\mathbb{R},\ell^2(\mathbb{C}^n))$ and fulfills for any $k$ the following condition: \begin{equation}\dot X_k(t)=\dot X_0(t)e^{-j\omega k t} \ a.e.\label{dphasor2}\end{equation}
\end{definition}
Similarly to the Riesz-Fisher theorem which establishes a one-to-one correspondence between the spaces $L^2$ and $\ell^2$, the following "Coincidence Condition" establishes a one-to-one correspondence between the spaces $L_{loc}^2$ and $H$.
\begin{theorem}[Coincidence Condition \cite{Blin}]\label{coincidence}For a given $X\in L_{loc}^{\infty}(\mathbb{R},\ell^2(\mathbb{C}^n))$, there exists a representative $x\in L^2_{loc}(\mathbb{R},\mathbb{C}^n)$ of $X$, i.e. $X=\mathcal{F}(x)$, if and only if $X$ belongs to $H$.
\end{theorem}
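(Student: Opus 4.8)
The plan is to prove the two implications separately: the forward one by direct computation and the converse by an explicit window-by-window reconstruction whose consistency is forced by the coincidence condition. For the necessity direction ($X=\mathcal{F}(x)\Rightarrow X\in H$), I would first fix $t$ and observe that $\{X_{i,k}(t)\}_{k}$ are exactly the Fourier coefficients of $x_i$ on the window $[t-T,t]$ relative to the orthonormal family $\{T^{-1/2}e^{j\omega k\tau}\}$; Parseval then gives $\sum_k\|X_k(t)\|^2=\frac1T\int_{t-T}^t\|x(\tau)\|^2\,d\tau<\infty$, so $X(t)\in\ell^2(\mathbb{C}^n)$. Continuity of $t\mapsto X(t)$ in $\ell^2$ follows from Bessel's inequality applied to $x$ on the two thin slivers $[t,t+h]$ and $[t-T,t+h-T]$ that are gained and lost when the window slides, both of which carry $L^2$-mass tending to $0$ with $h$. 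Each scalar component $X_{i,k}(t)=\frac1T\big(F(t)-F(t-T)\big)$ with $F(t)=\int_0^t x_i(\tau)e^{-j\omega k\tau}\,d\tau$ is absolutely continuous because $x\in L^2_{loc}\subset L^1_{loc}$; differentiating by Leibniz gives $\dot X_{i,k}(t)=\frac1T\big(x_i(t)-x_i(t-T)\big)e^{-j\omega k t}$ a.e., where I use $e^{-j\omega k(t-T)}=e^{-j\omega k t}$ since $\omega T=2\pi$. Taking $k=0$ identifies $\dot X_0(t)=\frac1T(x(t)-x(t-T))$ and yields the coincidence relation $\dot X_k(t)=\dot X_0(t)e^{-j\omega k t}$.

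For the sufficiency direction ($X\in H\Rightarrow$ a representative exists), I would reconstruct $x$ on each window and then glue. Since $X(t)\in\ell^2$ for every $t$, the Riesz--Fisher theorem furnishes, for each fixed $t$, a unique $x^{(t)}\in L^2(t-T,t)$ whose Fourier coefficients are the $X_k(t)$, formally $x^{(t)}(\sigma)=\sum_k X_k(t)e^{j\omega k\sigma}$. The whole point is to show that these local reconstructions agree on overlaps, so that they define a single $x\in L^2_{loc}$. To do this I would test against a smooth $\phi$ compactly supported in the interior of a window and compute $\frac{d}{dt}\int\phi(\sigma)x^{(t)}(\sigma)\,d\sigma=\sum_k\dot X_k(t)\hat\phi_k=\dot X_0(t)\sum_k\hat\phi_k\,e^{-j\omega k t}$, using the coincidence relation and the rapid decay of $\hat\phi_k=\int\phi(\sigma)e^{j\omega k\sigma}\,d\sigma$ to justify differentiation under the sum. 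The inner sum equals $T\sum_m\phi(t+mT)$ by the Poisson/Dirac-comb identity, which vanishes as long as the support of $\phi$ avoids the lattice $t+T\mathbb{Z}$, i.e. the window endpoints. Hence $t\mapsto\int\phi\,x^{(t)}$ is locally constant, so $x^{(t)}(\sigma)$ is independent of $t$ for a.e. interior $\sigma$; this lets me set $x(\sigma):=x^{(t)}(\sigma)$ unambiguously, and since finitely many windows cover any compact set, $x\in L^2_{loc}$ with $\mathcal{F}(x)=X$ by construction.

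The main obstacle is the gluing step, and specifically making the Dirac-comb computation rigorous near the window endpoints: the series $\sum_k e^{j\omega k(\sigma-t)}$ converges only distributionally, so I would keep everything in weak (tested) form and exploit that $t+T\mathbb{Z}$ meets the open interval $(t-T,t)$ nowhere, choosing $\phi$ supported away from both endpoints and $t$ in a small neighborhood so that every term $\phi(t+mT)$ vanishes. The endpoint contributions, where a Fourier series converges merely to the mean of one-sided values, are harmless because they touch $x^{(t)}$ only on a null set and we claim equality only a.e. A secondary point worth flagging is that the absolute continuity in the definition of $H$ must be read coordinate-wise in $k$: the formal strong $\ell^2$-derivative satisfies $\|\dot X(t)\|_{\ell^2}=\infty$ whenever $x(t)\neq x(t-T)$, which is exactly why the coincidence relation is stated per phasor rather than as a Bochner derivative.
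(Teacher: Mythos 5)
The paper itself contains no proof of Theorem~\ref{coincidence}: it is recalled verbatim from \cite{Blin}, so there is no in-paper argument to compare yours against. Judged on its own merits, your two-sided argument is sound and is, in outline, the natural proof of this equivalence: Parseval plus Leibniz differentiation for necessity, and window-by-window Riesz--Fischer reconstruction with a consistency (gluing) argument for sufficiency --- the latter being exactly what the paper's reconstruction formula \eqref{recos} reflects. Your necessity half is complete: the Parseval identity, the $\ell^2$-continuity via the two slivers, the componentwise absolute continuity of $X_{i,k}$, and the identity $\dot X_k(t)=\frac{1}{T}(x(t)-x(t-T))e^{-j\omega kt}$ a.e.\ are all correct.

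Two points in the sufficiency half should be tightened. First, passing from ``$\frac{d}{dt}\langle\phi,x^{(t)}\rangle=0$ a.e.'' to ``locally constant'' requires \emph{absolute continuity} of $t\mapsto\sum_k X_k(t)\hat\phi_k$; an a.e.\ vanishing derivative alone proves nothing (Cantor function). The ingredients are in your setup but should be assembled explicitly: by the coincidence relation $|\dot X_k(t)|=|\dot X_0(t)|$, so $\sum_k|\dot X_k(t)\hat\phi_k|\le|\dot X_0(t)|\sum_k|\hat\phi_k|\in L^1_{loc}$, hence one may write $\sum_k X_k(t)\hat\phi_k=\sum_k X_k(t_0)\hat\phi_k+\int_{t_0}^{t}\sum_k\dot X_k(s)\hat\phi_k\,ds$, and inside the integral apply classical Poisson summation for smooth compactly supported $\phi$ (absolutely convergent on both sides, so no distributional Dirac comb is needed); the integrand then vanishes for $s$ in the admissible range, giving constancy. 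Second, your closing remark that absolute continuity in the definition of $H$ must be read phasor-wise is not a side comment but essential for the statement to be true at all: for the unit step $x$ one computes $\|X(t+h)-X(t)\|_{\ell^2}=\sqrt{h/T}$, so over $N$ disjoint intervals of total length $\delta$ the increments sum to $\sqrt{N\delta/T}$, and $t\mapsto X(t)$ is \emph{not} absolutely continuous as an $\ell^2$-valued map; only the componentwise reading (together with $\ell^2$-continuity) is consistent with the necessity direction, in agreement with the paper's own observation that $\dot X\in C^0(\mathbb{R},\ell^\infty(\mathbb{C}^n))$ rather than $\ell^2$. With these two clarifications your proof is complete and self-contained, which is more than this paper offers for the statement.
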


\begin{definition}\label{bt}
	The block Toeplitz transformation of a $T-$periodic $n\times n$ matrix function $A\in L^{2}([0 \ T])$, denoted $\mathcal{A}:=\mathcal{T}(A)$, defines a constant $n\times n$ block Toeplitz and infinite dimensional matrix as follows: 
	$$\mathcal{A}:=\left(\begin{array}{cccc}
		\mathcal{A}_{11} & \mathcal{A}_{12} & \cdots & \mathcal{A}_{1n} \\
		\mathcal{A}_{21} & \mathcal{A}_{22} & & \vdots \\
		\vdots & & \ddots& \vdots \\
		\mathcal{A}_{n1} & \cdots & \mathcal{A}_{n(n-1)} & \mathcal{A}_{nn}\end{array}\right)$$ where the infinite matrices $\mathcal{A}_{ij}:=\mathcal{T}(a_{ij})$, $i,j:=1,\cdots,n$, are the Toeplitz transformation of the $(i,j)$ entry $a_{ij}(t)$ of the matrix $A(t)$:
	\begin{align*}
		\mathcal{T}(a_{ij}):=
		\left[
		\begin{array}{ccccc}
			\ddots & & \vdots & &\udots \\ & a_{ij,0} & a_{ij,-1} & a_{ij,-2} & \\
			\cdots & a_{ij,1} & a_{ij,0} & a_{ij,-1} & \cdots \\
			& a_{ij,2} & a_{ij,1} & a_{ij,0} & \\
			\udots & & \vdots & & \ddots\end{array}\right],\end{align*}
	with $a_{ij,k} :=\frac{1}{T}\int_{t-T}^t a_{ij}(\tau)e^{-j\omega k \tau}d\tau$. \\
\end{definition}
 
We need to recall some key results from \cite{Blin}. Under the "Coincidene Condition" of Theorem~\ref{coincidence}, it is established in \cite{Blin} that any periodic system having solutions in Carath\'eodory sense can be transformed by a sliding Fourier decomposition into a time invariant system. For instance, consider $T-$periodic functions $A(\cdot)$ and $B(\cdot)$ respectively of class $L^2([0\ T],\mathbb{C}^{n\times n})$ and $L^{\infty}([0\ T],\mathbb{C}^{n\times m})$ and let the linear time periodic system:
\begin{align}\dot x(t)=A(t)x(t)+B(t)u(t)\quad x(0):=x_0\label{ltp}\end{align}
If, $x$ is a solution associated to the control $u\in L_{loc}^2(\mathbb{R},{\mathbb{C}^m)}$ of the linear time periodic system (\ref{ltp}) then, $X:=\mathcal{F}(x)$ is a solution of the linear time invariant system:
\begin{align}
	\dot X(t)=(\mathcal{A}-\mathcal{N})X(t)+\mathcal{B}U(t), \quad X(0)=\mathcal{F}(x)(0) \label{ltih}
\end{align}
where $\mathcal{A}:=\mathcal{T}(A)$, $\mathcal{B}:=\mathcal{T}(B)$ and 
\begin{equation}\mathcal{N}:=Id_n\otimes diag(j\omega k,\ k\in \mathbb{Z})\label{N}\end{equation}
Reciprocally, if $X\in H$ is a solution of \eqref{ltih} with $U\in H$, then its representative $x$ 
(i.e. $X=\mathcal{F}(x))$ is a solution of \eqref{ltp}. Moreover, for any $k\in\mathbb{Z}$, the phasors $X_k \in C^1(\mathbb{R},\mathbb{C}^n)$ and $\dot X\in C^0(\mathbb{R},\ell^{\infty}(\mathbb{C}^n))$. As the solution $x$ is unique for the initial condition $x_0$, $X$ is also unique for the initial condition $X(0):=\mathcal{F}(x)(0)$. In addition, it is proved in \cite{Blin} that one can reconstuct time trajectories from harmonic ones, that is:
\begin{align}\label{recos} x(t)&=\mathcal{F}^{-1}(X)(t):=\sum_{p=-\infty}^{+\infty} X_p(t)e^{j\omega p t}+\frac{T}{2}\dot X_0(t)\end{align}
where $X_{k}=(X_{1,k}, \cdots, X_{n,k})$ for any $k\in \mathbb{Z}$.\\

We recall also the following results where the proofs can be found in \cite{Gohberg} (Part V p.p. 562-574).
\begin{theorem}\label{borne} Let $A(t)\in L^2([0 \ T],\mathbb{C}^{n\times m})$. Then, $\mathcal{A}:=\mathcal{T}(A)$ is a bounded operator on $\ell^2$ if and only if $A\in L^{\infty}([0\ T],\mathbb{C}^{n\times m} )$.
	Moreover, the operator norm induced by the $\ell^2$-norm satisfies: $\|A(z)\|_{\ell^2}=\|\mathcal{A}\|_{\ell^2}=\|A\|_{L^{\infty}}$.
\end{theorem}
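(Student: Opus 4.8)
The plan is to recognize that $\mathcal{A}=\mathcal{T}(A)$ is nothing but a (block) \emph{Laurent} operator, and that the Riesz--Fisher isomorphism between $\ell^2(\mathbb{Z})$ and $L^2([0\ T])$ turns it into the operator of multiplication by the symbol $A(\cdot)$. Concretely, let $\Phi_m:\ell^2(\mathbb{Z},\mathbb{C}^m)\to L^2([0\ T],\mathbb{C}^m)$ and $\Phi_n$ (defined analogously) be the unitary maps sending a phasor sequence $(c_k)_{k\in\mathbb{Z}}$ to $\sum_k c_k e^{j\omega k t}$. Since the $(p,q)$ block entry of $\mathcal{A}$ is the Fourier coefficient $a_{p-q}$ of $A$, a direct computation identifying the matrix action with a convolution shows $\Phi_n\,\mathcal{A}=M_A\,\Phi_m$, where $M_A f(t):=A(t)f(t)$. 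Hence $\mathcal{A}$ and $M_A$ are unitarily equivalent, and the theorem reduces to the classical characterization of bounded multiplication operators: $M_A$ is bounded on $L^2$ iff $A\in L^\infty$, with $\|M_A\|=\|A\|_{L^\infty}:=\operatorname{ess\,sup}_{t}\|A(t)\|_{\ell^2}$.

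For the sufficiency and the upper bound I would assume $A\in L^\infty$. For $f\in L^2([0\ T],\mathbb{C}^m)$ one estimates pointwise $\|A(t)f(t)\|\le\|A(t)\|_{\ell^2}\|f(t)\|\le\|A\|_{L^\infty}\|f(t)\|$, and integrating over $[0\ T]$ gives $\|M_Af\|_{L^2}\le\|A\|_{L^\infty}\|f\|_{L^2}$. Thus $M_A$, and equivalently $\mathcal{A}$, is bounded and $\|\mathcal{A}\|_{\ell^2}=\|M_A\|\le\|A\|_{L^\infty}$.

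The reverse inequality and the necessity direction are where the real work lies. Fix $\gamma<\|A\|_{L^\infty}$ (allowing $\|A\|_{L^\infty}=+\infty$). By definition of the essential supremum the set $E:=\{t:\|A(t)\|_{\ell^2}>\gamma\}$ has positive measure, and on $E$ one must select a \emph{measurable} unit-vector field $v(t)\in\mathbb{C}^m$ with $\|A(t)v(t)\|>\gamma$ a.e. on $E$; producing this selection (e.g. via a measurable choice of a near-maximal right singular vector of $A(t)$) is the main technical point, and is where one must invoke a measurable-selection argument together with the measurability of $t\mapsto A(t)$. Testing $M_A$ on $f:=\mathbf{1}_{E}\,v$ then yields $\|M_Af\|_{L^2}^2=\int_E\|A(t)v(t)\|^2\,dt\ge\gamma^2\,|E|=\gamma^2\|f\|_{L^2}^2$, so $\|M_A\|\ge\gamma$. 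Letting $\gamma\uparrow\|A\|_{L^\infty}$ gives $\|\mathcal{A}\|_{\ell^2}\ge\|A\|_{L^\infty}$; combined with the previous paragraph this proves the norm equality when $A\in L^\infty$, while if $A\notin L^\infty$ the same construction (taking $\gamma\to\infty$) exhibits unbounded Rayleigh quotients, so $\mathcal{A}$ cannot be bounded. Finally, the identity $\|A(z)\|_{\ell^2}=\|A\|_{L^\infty}$ is just the statement that the essential supremum of the pointwise spectral norm of the symbol equals its $L^\infty$ norm, closing the chain of equalities.
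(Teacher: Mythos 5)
Your proof is correct and follows essentially the same route as the paper's: the paper gives no inline proof but defers to Gohberg--Goldberg--Kaashoek (Part V), where the result is established exactly as you propose, by identifying the block Laurent operator $\mathcal{T}(A)$ with the multiplication operator $M_A$ on $L^2([0\ T])$ via the Fourier (Riesz--Fisher) unitary and then invoking the classical characterization of bounded multiplication operators. The only step you flag as delicate, the measurable selection of $v(t)$, can in fact be bypassed entirely: writing $E=\bigcup_j\{t:\|A(t)v_j\|>\gamma\}$ over a countable dense set $\{v_j\}$ of unit vectors, some member $E_j$ of the union has positive measure, and testing with the constant field $f:=\mathbf{1}_{E_j}v_j$ already yields $\|M_A\|\ge\gamma$.
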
	
\begin{theorem}\label{inverse}
	Let $A(t)\in L^\infty([0 \ T],\mathbb{C}^{n\times n})$. $\mathcal{A}$ is invertible if and only if there exists $\gamma>0$ such that the set $\{t: |\det(A(t)|<\gamma\}$ has measure zero. The inverse $\mathcal{A}^{-1}$ is determined by $\mathcal{T}(A^{-1})$.
	In addition, $\mathcal{A}$ is invertible if and only if $\mathcal{A}$ is a Fredholm operator,
	or equivalently in this setting if and only if
	there exists $c > 0$ such that
	$$\|\mathcal{A}x\|_{\ell^2} > c \|x\|_{\ell^2},\text{ for any } x \in \ell^2.$$
\end{theorem}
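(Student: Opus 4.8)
The plan is to convert the abstract block-Toeplitz operator $\mathcal{A}=\mathcal{T}(A)$ into a matrix multiplication operator on $L^2$, where invertibility is transparent, and then transport every conclusion back. First I would make explicit the unitary equivalence already underlying Theorem~\ref{borne}: identifying $\ell^2(\mathbb{C}^n)$ (sequences indexed by $k\in\mathbb{Z}$) with $L^2([0\ T],\mathbb{C}^n)$ through the orthonormal Fourier basis $\{e^{j\omega k t}e_i\}$, the doubly-infinite block-Toeplitz matrix of Definition~\ref{bt} is precisely the matrix, in this basis, of the multiplication operator $M_A:f(t)\mapsto A(t)f(t)$. Thus $\mathcal{A}$ and $M_A$ are unitarily equivalent, so invertibility, the Fredholm property, and the bounded-below constant $c$ are all preserved under this equivalence, and it suffices to analyse $M_A$.

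Next I would characterise invertibility of $M_A$. The key identity is $\|M_A f\|_{L^2}^2=\int_0^T \|A(t)f(t)\|^2\,dt$, so $M_A$ is bounded below by $c$ if and only if the smallest singular value satisfies $\sigma_{\min}(A(t))\ge c$ for a.e.\ $t$; the only-if direction follows by testing against a function $f$ supported on the set where $\sigma_{\min}(A(t))<c$ and aligned with the corresponding singular direction, which forces that set to be null. Since $A$ is square, $\sigma_{\min}(A(t))\ge c$ a.e.\ already yields $A(t)^{-1}$ a.e.\ with $\|A(t)^{-1}\|\le 1/c$, hence $A^{-1}\in L^\infty$ and $M_A^{-1}=M_{A^{-1}}$. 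To pass to the determinant condition I would use $\|A\|_{L^\infty}=:M<\infty$ together with $|\det A(t)|=\prod_i\sigma_i(A(t))$: then $\sigma_{\min}\ge c$ a.e.\ gives $|\det A|\ge c^n$ a.e., while conversely $|\det A|\ge\gamma$ and $\sigma_i\le M$ give $\sigma_{\min}\ge\gamma/M^{\,n-1}$ a.e. Hence the three conditions — $\{t:|\det A(t)|<\gamma\}$ null, $M_A$ invertible, and $\mathcal{A}$ bounded below — all coincide.

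To obtain the explicit formula $\mathcal{A}^{-1}=\mathcal{T}(A^{-1})$ I would invoke that, for doubly-infinite (Laurent) block-Toeplitz operators, the Toeplitz transformation is multiplicative, $\mathcal{T}(A)\mathcal{T}(B)=\mathcal{T}(AB)$ — this is immediate from the multiplication picture $M_A M_B=M_{AB}$, and is exactly the point where two-sidedness of the matrix in Definition~\ref{bt} matters. With $A^{-1}\in L^\infty$ from the previous step, $\mathcal{T}(A^{-1})$ is bounded by Theorem~\ref{borne}, and $\mathcal{T}(A)\mathcal{T}(A^{-1})=\mathcal{T}(Id_n)=\mathcal{I}$, yielding the claimed inverse.

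The delicate part, and the one I expect to be the main obstacle, is the Fredholm equivalence. The implication invertible $\Rightarrow$ Fredholm is trivial, so the content is Fredholm $\Rightarrow$ invertible. Here I would use that a matrix multiplication operator has no spectrum beyond its essential spectrum: $M_A$ commutes with every scalar multiplication, so any kernel or cokernel localizes to a positive-measure set on which $\det A$ vanishes and is therefore either trivial or infinite-dimensional, which rules out a finite-index non-invertible operator. Equivalently, a Laurent operator has Fredholm index zero and its essential spectrum equals its spectrum, so $M_A$ Fredholm forces $0\notin\mathrm{ess\,ran}(\det A(\cdot))$, i.e.\ condition (b). This is precisely the step that genuinely relies on the operator-theoretic machinery of \cite{Gohberg} rather than on elementary $L^2$ estimates, and it is where extra care is needed to accommodate an $L^\infty$ rather than continuous symbol.
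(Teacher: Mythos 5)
Your proposal is correct: the identification of the doubly-infinite block Toeplitz (Laurent) operator $\mathcal{T}(A)$ with the multiplication operator $M_A$ on $L^2([0,T],\mathbb{C}^n)$, the equivalence of being bounded below with $\sigma_{\min}(A(t))$ essentially bounded away from zero (hence, via $|\det A|=\prod_i\sigma_i$ and $\|A\|_{L^\infty}<\infty$, with the determinant condition), the multiplicativity giving $\mathcal{T}(A)\mathcal{T}(A^{-1})=\mathcal{I}$, and the trivial-or-infinite-dimensional dichotomy for kernels and cokernels of matrix multiplication operators (which kills the Fredholm-but-not-invertible case) are all sound, with only routine measurable-selection details left implicit. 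Note that the paper does not prove this theorem at all --- it is recalled from \cite{Gohberg} (Part V, pp.~562--574) --- and your argument via the symbol/multiplication-operator picture is essentially the standard proof in that reference, so this counts as the same approach.
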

~~\\
The problem we want to solve in this paper can be formulated as follows. Consider the harmonic model (\ref{ltih}), design an harmonic state feedback control law of the form $U=-\mathcal{K}X$  that assigns the poles of the closed loop harmonic model to some desired locations and provide the corresponding representative in the time domain $u(t)=-K(t)x(t)$.

\section{Main result}
 In the sequel, we will use the following notion of controllability. 
\begin{definition}(see \cite{Rodman,Rabah}) System \eqref{ltih} is said to be exactly controllable at time $t$, if for all $X_0$, $X_1\in \ell^2$, there exists a control $U(t)\in L^2([0,t],\ell^2)$ such that the corresponding solution satisfies $X(t) = X_1$. 
\end{definition}
In this setting, exact controllability can be characterized as follows. 
\begin{proposition}
System \eqref{ltih} is exactly controllable at time $t$ if and only if
$$\exists \delta>0:\forall x\in \ell^2,\int_0^t\|\mathcal{B}^*e^{(\mathcal{A}-\mathcal{N})^*\tau}x\|^2 d\tau \geq \delta \| x \|^2$$
\end{proposition}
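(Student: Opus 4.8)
The plan is to recognize this proposition as the infinite-dimensional incarnation of the classical duality between reachability (surjectivity of the input-to-state map) and an observability inequality for the adjoint dynamics, and then to reduce the whole statement to a standard Hilbert-space range lemma. The object carrying all the information is the input map, so the first task is to set it up rigorously. Since $A\in L^\infty$, Theorem~\ref{borne} guarantees that $\mathcal{A}=\mathcal{T}(A)$ is bounded on $\ell^2$, and the same holds for $\mathcal{B}=\mathcal{T}(B)$. The operator $\mathcal{N}$ is skew-adjoint and diagonal with entries $j\omega k$, hence generates a strongly continuous unitary group; therefore $\mathcal{A}-\mathcal{N}$ generates a $C_0$-semigroup $\left(e^{(\mathcal{A}-\mathcal{N})\tau}\right)_{\tau\ge 0}$ on $\ell^2$, and the mild solution of \eqref{ltih} reads $X(t)=e^{(\mathcal{A}-\mathcal{N})t}X_0+\mathcal{L}_t U$, where the bounded input map $\mathcal{L}_t:L^2([0,t],\ell^2)\to\ell^2$ is defined by
$$\mathcal{L}_t U=\int_0^t e^{(\mathcal{A}-\mathcal{N})(t-\tau)}\mathcal{B}\,U(\tau)\,d\tau.$$

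Next I would reduce exact controllability to a surjectivity statement. By linearity of \eqref{ltih}, the target condition $X(t)=X_1$ is solvable for some admissible control if and only if $X_1-e^{(\mathcal{A}-\mathcal{N})t}X_0\in\mathrm{ran}(\mathcal{L}_t)$. As $X_0,X_1$ range over all of $\ell^2$, the free term $e^{(\mathcal{A}-\mathcal{N})t}X_0$ ranges over a subspace while $X_1$ ranges over everything, so exact controllability at time $t$ is equivalent to $\mathcal{L}_t$ being onto. The proof then hinges on the standard duality lemma (a direct consequence of the closed range theorem): a bounded operator $\mathcal{L}_t$ between Hilbert spaces is surjective if and only if its adjoint is bounded below, i.e. there exists $\delta>0$ with $\|\mathcal{L}_t^* x\|^2\ge\delta\|x\|^2$ for all $x\in\ell^2$. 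Indeed, coercivity forces $\mathcal{L}_t^*$ to be injective with closed range, whence $\ker\mathcal{L}_t^*=(\mathrm{ran}\,\mathcal{L}_t)^\perp=\{0\}$ and $\mathrm{ran}\,\mathcal{L}_t$ is closed, so $\mathcal{L}_t$ is onto; the converse is the same chain of equivalences read backwards.

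It then remains to identify $\mathcal{L}_t^*$ and match the claimed integral. A routine adjoint computation gives $(\mathcal{L}_t^*x)(\tau)=\mathcal{B}^*e^{(\mathcal{A}-\mathcal{N})^*(t-\tau)}x$, so that
$$\|\mathcal{L}_t^*x\|_{L^2}^2=\int_0^t\big\|\mathcal{B}^*e^{(\mathcal{A}-\mathcal{N})^*(t-\tau)}x\big\|^2\,d\tau=\int_0^t\big\|\mathcal{B}^*e^{(\mathcal{A}-\mathcal{N})^*s}x\big\|^2\,ds,$$
after the substitution $s=t-\tau$, which is exactly the inequality in the statement. Chaining the three equivalences—exact controllability $\Leftrightarrow$ $\mathcal{L}_t$ onto $\Leftrightarrow$ $\mathcal{L}_t^*$ bounded below $\Leftrightarrow$ the stated Gramian inequality—closes the argument.

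I expect the main obstacle to be entirely in the functional-analytic preliminaries rather than in the duality itself: one must confirm that $\mathcal{A}-\mathcal{N}$ genuinely generates a $C_0$-semigroup on $\ell^2$ and that $\mathcal{B}$ is bounded, so that $\mathcal{L}_t$ and $\mathcal{L}_t^*$ are well-defined bounded maps between the Hilbert spaces $L^2([0,t],\ell^2)$ and $\ell^2$. Once this is secured, the equivalence ``surjective $\Leftrightarrow$ adjoint bounded below'' is standard and the change of variables is immediate; the only point deserving explicit care is that, unlike in finite dimensions, surjectivity of $\mathcal{L}_t$ does \emph{not} follow from mere injectivity of $\mathcal{L}_t^*$ but genuinely requires the uniform lower bound to guarantee closed range.
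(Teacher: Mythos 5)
Your proof is essentially correct, but note that the paper itself gives no argument at all for this proposition: it simply defers to the reference \cite{Rabah} (Rabah, Sklyar, Barkhayev), where this Gramian-type characterization of exact controllability is established for a more general class of infinite-dimensional systems. What you have written is the standard self-contained route that such references follow: reduce exact controllability to surjectivity of the input map $\mathcal{L}_t$ (taking $X_0=0$ for the converse direction), invoke the Hilbert-space duality ``surjective $\Leftrightarrow$ adjoint bounded below'' via the closed range theorem, and identify $\mathcal{L}_t^*x$ as $\tau\mapsto\mathcal{B}^*e^{(\mathcal{A}-\mathcal{N})^*(t-\tau)}x$ so that the coercivity of $\mathcal{L}_t^*$ is exactly the stated integral inequality after the substitution $s=t-\tau$. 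All three equivalences are sound, and your closing remark that injectivity of $\mathcal{L}_t^*$ alone would not suffice (one genuinely needs the uniform lower bound for closed range) is precisely the point where infinite dimensions differ from the finite-dimensional Gramian argument.

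One caveat deserves attention: you open by asserting $A\in L^\infty$ so that $\mathcal{A}$ is bounded via Theorem~\ref{borne}, but the paper's standing assumption for system \eqref{ltp}--\eqref{ltih} is only $A\in L^2([0\ T])$ (the stronger $L^\infty$ hypothesis is added explicitly only in Theorem~\ref{approx_P_n}). With $A$ merely $L^2$, the operator $\mathcal{A}$ need not be bounded and $\mathcal{A}-\mathcal{N}$ is not a bounded perturbation of a skew-adjoint generator, so your justification of the $C_0$-semigroup $e^{(\mathcal{A}-\mathcal{N})\tau}$ does not apply as stated; one must instead appeal to the well-posedness of the harmonic system established in \cite{Blin} (the time-domain transition matrix of an $L^1_{loc}$ system is continuous on compacts, and its sliding Fourier lift provides the bounded evolution operator). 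Since $\mathcal{B}$ is bounded ($B\in L^\infty$), the input map $\mathcal{L}_t$ remains bounded and the duality argument goes through unchanged; only your justification of the preliminaries, not the duality itself, is tied to the extra $L^\infty$ assumption.
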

\begin{proof}see \cite{Rabah}
\end{proof}

\subsection{Harmonic Pole Placement}

%
%
The next Theorem provides a way to design a state feedback harmonic control law that assigns the poles of the closed loop system to some specified location.
%
%

\begin{theorem}\label{sylPP} Assume that the pair $(\mathcal{A}-\mathcal{N},\mathcal{B})$ is exactly controllable. Let $G(t)$ be a $T-$periodic function in $L^\infty$ and define $\mathcal{G}:=\mathcal{T}(G)$.
	Consider a $n-$dimensional Jordan normal form $\Lambda$ such that 
		the spectrum of $\mathcal{A}-\mathcal{N}$ and $\Lambda \otimes \mathcal{I}-\mathcal{N}$ have no common value i.e. $\sigma(\mathcal{A}-\mathcal{N})\cap \sigma(\Lambda \otimes \mathcal{I}-\mathcal{N})=\emptyset$. 
	
	Then, the bounded operator (on $\ell^2$) $\mathcal{P}$ is the unique solution of the harmonic Sylverter equation:
	\begin{align}(\mathcal{A}-\mathcal{N})\mathcal{P}-\mathcal{P}(\Lambda \otimes \mathcal{I}-\mathcal{N})&=\mathcal{B}\mathcal{G}\label{syl1}
	\end{align}
	if and only if $P:=\mathcal{T}^{-1}(\mathcal{P})$
	is the $T-$periodic solution (in Carath\'eodory sense) of the differential Sylvester equation
	\begin{align}
		\dot P(t)&=A(t)P(t)-P(t)\Lambda-B(t)G(t)\label{syl2}
	\end{align}

	If $\mathcal{P}$ is invertible, then $\mathcal{P}^{-1}$ is also a bounded operator on $\ell^2$ and ${P}^{-1}$ is a $T-$periodic solution in Carath\'eodory sense of the differential Sylvester equation:
	$$\dot P^{-1}(t)=-P^{-1}(t)(A(t)-B(t)K(t))+\Lambda P^{-1}(t)\ a.e.$$
	where $K(t):=G(t)P^{-1}(t) \in L^\infty$.
	Moreover, the harmonic state feedback $U:=-\mathcal{K}X$ with $\mathcal{K}:=\mathcal{G}\mathcal{P}^{-1}$, is a bounded operator on $\ell^2$ and assigns the poles of the closed loop harmonic system, that is:
	$$\mathcal{P}^{-1}(\mathcal{A}-\mathcal{N}-\mathcal{B}\mathcal{K})\mathcal{P}=(\Lambda \otimes \mathcal{I}-\mathcal{N}).$$
	In addition, taking $z(t):=P^{-1}(t)x(t)$ transforms the closed loop LTP system \eqref{ltp} with $u(t):=-K(t)x(t)$ and $K(t):=G(t)P^{-1}(t)\in L^\infty$
 into the LTI system
	\begin{equation}
		\dot z=\Lambda z\ a.e.\label{lti}
		\end{equation}

\end{theorem}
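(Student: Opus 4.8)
The plan is to set up a precise dictionary between the operator (block-Toeplitz) picture and the time-domain picture, and then to transport each assertion across it. The two structural identities I will establish first are the product rule $\mathcal{T}(AP)=\mathcal{A}\mathcal{P}$ (valid because $A,P\in L^\infty$ make the relevant Toeplitz operators bounded by Theorem~\ref{borne}) and the commutator rule $\mathcal{T}(\dot P)=\mathcal{N}\mathcal{P}-\mathcal{P}\mathcal{N}$. The latter is the crux of the correspondence: writing $P(t)=\sum_k P_k e^{j\omega k t}$ for the $T$-periodic $P$, the $k$-th Fourier coefficient of $\dot P$ is $j\omega k\,P_k$, while a direct block computation gives $(\mathcal{N}\mathcal{P}-\mathcal{P}\mathcal{N})_{m,l}=j\omega(m-l)P_{m-l}$, which is exactly the $(m,l)$ block of $\mathcal{T}(\dot P)$. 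With these two identities in hand, applying $\mathcal{T}$ to the differential equation \eqref{syl2} and collecting the $\mathcal{N}$ terms turns it verbatim into \eqref{syl1}, while applying $\mathcal{T}^{-1}$ reverses the step; this yields the ``if and only if''.

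For the uniqueness/existence assertion I would argue on both sides. On the operator side, the Sylvester map $\mathcal{S}(X)=(\mathcal{A}-\mathcal{N})X-X(\Lambda\otimes\mathcal{I}-\mathcal{N})$ is boundedly invertible precisely when $\sigma(\mathcal{A}-\mathcal{N})\cap\sigma(\Lambda\otimes\mathcal{I}-\mathcal{N})=\emptyset$ (a Rosenblum-type spectral separation), which is exactly the hypothesis; this pins down the unique bounded $\mathcal{P}$. On the time-domain side, \eqref{syl2} is a linear matrix ODE whose unique $T$-periodic solution exists iff the monodromy of the homogeneous part $\dot P=AP-P\Lambda$ has no unit eigenvalue, i.e. the Floquet multipliers of $A$ avoid $\{e^{\lambda T}\}_{\lambda\in\sigma(\Lambda)}$. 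I would then note that, through the Floquet/spectral correspondence for harmonic operators of \cite{Pierre2022}, these two conditions coincide, so ``unique bounded $\mathcal{P}$'' and ``unique $T$-periodic $P$'' match up.

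The remaining assertions are then essentially algebraic. If $\mathcal{P}$ is invertible, Theorem~\ref{inverse} gives $\mathcal{P}^{-1}=\mathcal{T}(P^{-1})$ bounded with $P^{-1}\in L^\infty$, so $K:=GP^{-1}\in L^\infty$ and $\mathcal{K}=\mathcal{G}\mathcal{P}^{-1}=\mathcal{T}(K)$ is bounded. Differentiating $P P^{-1}=Id_n$ gives $\dot{(P^{-1})}=-P^{-1}\dot P\,P^{-1}$, into which I substitute \eqref{syl2} and simplify to obtain the stated ODE for $P^{-1}$. For the pole-assignment identity I use $\mathcal{K}\mathcal{P}=\mathcal{G}$, hence $\mathcal{B}\mathcal{G}=\mathcal{B}\mathcal{K}\mathcal{P}$; inserting this into \eqref{syl1} and rearranging yields $(\mathcal{A}-\mathcal{N}-\mathcal{B}\mathcal{K})\mathcal{P}=\mathcal{P}(\Lambda\otimes\mathcal{I}-\mathcal{N})$, and left-multiplying by $\mathcal{P}^{-1}$ gives the claimed conjugation. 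Finally, for $z:=P^{-1}x$ with $u=-Kx$, I compute $\dot z=\dot{(P^{-1})}x+P^{-1}\dot x$, substitute the $P^{-1}$-ODE together with $\dot x=(A-BK)x$, and observe that the $-P^{-1}(A-BK)x$ terms cancel, leaving $\dot z=\Lambda z$.

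The main obstacle is not the algebra but the functional-analytic bookkeeping forced by the unboundedness of $\mathcal{N}$ (hence of $\mathcal{A}-\mathcal{N}$ and $\Lambda\otimes\mathcal{I}-\mathcal{N}$): the Sylvester equation \eqref{syl1} has unbounded ``coefficients'' yet seeks a bounded solution, so the commutator identity $\mathcal{T}(\dot P)=\mathcal{N}\mathcal{P}-\mathcal{P}\mathcal{N}$ must be justified on appropriate domains, and the Rosenblum-type separation argument must be carried out for unbounded operators rather than quoted off the shelf. I expect to lean on the spectral and Floquet-factorization results of \cite{Pierre2022} precisely here, and to use the Coincidence Condition (Theorem~\ref{coincidence}) to guarantee that the bounded operator $\mathcal{P}$ indeed corresponds to an absolutely continuous, hence Carath\'eodory-differentiable, $T$-periodic $P(t)$.
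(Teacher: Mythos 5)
Your proposal is correct, and its skeleton matches the paper's: establish the exact correspondence between \eqref{syl1} and \eqref{syl2}, then use Toeplitz algebra (Theorems~\ref{borne} and~\ref{inverse}) and the same closing computations for $\mathcal{K}$, the conjugation identity, and $\dot z=\Lambda z$. The genuine differences lie in how three sub-claims are discharged. First, the paper disposes of the equivalence by citing ``similar steps to the proof of Theorem 5 in \cite{Blin}''; you prove it from the product rule $\mathcal{T}(AP)=\mathcal{A}\mathcal{P}$ (exact here because these are bi-infinite, Laurent-type operators, with no Hankel correction) and the commutator identity $\mathcal{T}(\dot P)=\mathcal{N}\mathcal{P}-\mathcal{P}\mathcal{N}$, which is precisely the content hidden in that citation; your caveat that these identities must be read on suitable domains, forced by the unboundedness of $\mathcal{N}$, is well placed. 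Second, for the differential equation satisfied by $P^{-1}$, the paper stays in the harmonic domain --- it rearranges \eqref{syl1} into \eqref{syl3} and transports that identity back to the time domain --- whereas you differentiate $PP^{-1}=Id_n$ and substitute \eqref{syl2}; your route is more elementary and self-contained, and it is legitimate because invertibility of $\mathcal{P}$ gives, via Theorem~\ref{inverse}, $|\det P(t)|$ bounded away from zero, so $P^{-1}$ is absolutely continuous along with $P$. Third, you make existence/uniqueness explicit by matching the spectral-disjointness hypothesis with the condition that the homogeneous equation $\dot P=AP-P\Lambda$ has no unit Floquet multiplier, using the spectral description $\sigma(\mathcal{A}-\mathcal{N})=\{\lambda_p+j\omega k\}$ from \cite{Pierre2022}; the paper leaves this entirely to the citation, and your remark that a Rosenblum-type theorem cannot simply be quoted for these unbounded operators is exactly right --- the monodromy argument is the correct rigorous substitute. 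One point both you and the paper leave implicit: uniqueness in \eqref{syl1} is asserted among all bounded operators, not only block-Toeplitz ones, so strictly one should also observe that the equation is invariant under conjugation by the shift (the $j\omega$ offsets in $\mathcal{N}$ cancel between the two sides), which together with uniqueness forces any bounded solution to be Toeplitz and hence of the form $\mathcal{T}(P)$.
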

\begin{proof}
	If $\mathcal{P}$ solves \eqref{syl1} then $\mathcal{P}$ belongs trivially to $H$, and $P:=\mathcal{T}^{-1}( \mathcal{P})$
	satisfies the differential Sylvester equation \eqref{syl2} and reciprocally. This equivalence is obtained using similar steps to the proof of Theorem 5 in \cite{Blin}. 
	As the product $BG$ belongs to $L^\infty$, we also prove that $\mathcal{P}$ is a bounded operator on $\ell^2$ (see the proof of Theorem 5 in \cite{Blin}). 
	
	If $\mathcal{P}$ is invertible, then is ${P}$. Thus, $P$ and $P^{-1}:=\mathcal{T}^{-1}( \mathcal{P}^{-1})$ are both in $L^\infty$ (see Theorem~\ref{borne}). 
	As $G\in L^{\infty}$, the product $GP^{-1}$ belongs to $L^{\infty}$ and $\mathcal{K}:=\mathcal{G}\mathcal{P}^{-1}=\mathcal{T}(G)\mathcal{T}(P^{-1})=\mathcal{T}(GP^{-1})$ is a (block) Toeplitz and bounded operator on $\ell^2$.
	
	From \eqref{syl1}, we have :
	\begin{align}\mathcal{P}^{-1}(\mathcal{A}-\mathcal{B}\mathcal{K}-\mathcal{N})-(\Lambda \otimes \mathcal{I}-\mathcal{N})\mathcal{P}^{-1}&=0\label{syl3}
	\end{align}
	which implies (using similar steps of the proof of Theorem 5 in \cite{Blin}) that $P^{-1}:=\mathcal{T}^{-1}(\mathcal{P}^{-1})$ satisfies:
	$$\dot P^{-1}(t)=-P^{-1}(t)(A(t)-B(t)K(t))+\Lambda P^{-1}(t)\ a.e.$$
 in Carath\'eodory sense	with $K(t):=G(t)P^{-1}(t) \in L^\infty$.
		
	Furthermore, the control $U:=-\mathcal{K}X$ allows to assign the poles of the closed loop harmonic system to the set $\sigma:=\{\lambda+j\omega k:k\in \mathbb{Z},\lambda \in diag(\Lambda)\}$ according to the following relation:
	$\mathcal{P}^{-1}(\mathcal{A}-\mathcal{N}-\mathcal{B}\mathcal{K})\mathcal{P}=(\Lambda \otimes \mathcal{I}-\mathcal{N})$.
	
	Finally, taking $z(t):=P^{-1}(t)x(t)$, 
we have:
    \begin{align*}
	\dot z&=\dot P^{-1}(t)x+P^{-1}(t)\dot x \ a.e.\\
	&=\Lambda z(t) \ a.e.
	\end{align*}	
\end{proof}
\begin{remark}Theorem~\ref{sylPP} deserves some explanations concerning the choice of pole locations associated to $\Lambda$.
Such a limited choice in an infinite space is directly related to the spectral properties of the harmonic state operator $\mathcal{A}-\mathcal{N}$.
Indeed, as it has been shown in \cite{Pierre2022}, the spectrum of this operator is a discrete and unbounded set that contains only eigenvalues. This set is given by:
$$\sigma:=\{\lambda_p+j\omega k: k\in \mathbb{Z}, p:=1,\cdots,n\}$$ where the values $\lambda_p$, $p:=1,\cdots,n$ can be determined by a Floquet factorization of $A(t)$ provided in \cite{Pierre2022}. As a consequence, there are only $n$ locations to be fixed for pole placement. The above Theorem shows that this choice imposes the dynamics of $z:=P(t)^{-1}x(t)$.
\end{remark}
\begin{remark} 
	To explain the effect of pole placement for LTP systems, consider a one dimensional system and assume that, after a pole placement, equation~\eqref{lti} is given by $\dot z=-\alpha z$
	and $\dot x= (a(t)-b(t)k(t))x(t)$ where $x(t):=P(t)z(t)$.
	As $P(t)$ is periodic, we can write for $k:=1,2,\cdots$ :
	$$x(t_0+kT)=P(t_0)e^{-\alpha T k}z(t_0)$$ for any $t_0\in[0\ T]$. We see that the sequence $x(t_0+kT)$, $k:=1,2,\cdots$ for any $t_0\in[0\ T]$ has a decay rate $\gamma:=e^{-\alpha T}$. 
\end{remark}
\begin{remark}\label{inv}
Unlike the classical finite-dimensional pole placement problem where, in order to impose the invertibility of the solution of the Sylvester equation, an observability property is required \cite{Varga,Zbigniew}, such a necessary condition is not a sufficient in the infinite dimension case. We illustrate this using a counter example at the end of this paper.
\end{remark}

From a practical point of view, to check the invertibility of $\mathcal{P}$, we invoke Theorem~\ref{inverse} that is to check that 
 there exists $c>0$ such that $$\|\mathcal{P}x\|_{\ell^2}\geq c\|x\|_{\ell^2},\ x\in {\ell^2} $$
or equivalently, as $P(t)$ is continuous, that $|\det P(t)|\neq 0$ for any $t\in [0\ T]$. We also propose a sufficient condition to ensure the invertibility of $\mathcal{P}$ for a particular choice of poles locations. To this aim, let us consider $(V(t),J)$ the Floquet factorization \cite{Farkas,Zhou} of $A(t)$ so that $V^{-1}(t)A(t)V(t)=J$ where $J$ is a Jordan normal form and where $V(t)$ is $T-$periodic, invertible and absolutely continuous matrix function. Note that such a Floquet factorization always exists and can be explicitly computed thanks to the closed form formula given by Theorem~6 in \cite{Pierre2022}.
 
\begin{theorem}\label{suf}Consider a Floquet factorization given by $(V(t),J)$ and let $\mathcal{V}:=\mathcal{T}(V)$. The solution $\mathcal{P}$ of the Sylvester equation \eqref{syl1} is invertible
if $\mathcal{G}$ is set to $\mathcal{G}:=\mathcal{B}^*\mathcal{V}^{*-1}$ in \eqref{syl1} and
the poles locations correspond to $\Lambda:=-J^*-\alpha Id_n$ with $\alpha$ a real number.
\end{theorem}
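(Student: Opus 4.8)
The plan is to convert the harmonic Sylvester equation \eqref{syl1} into a Lyapunov equation whose solution is a controllability Gramian, and then to read invertibility off from the exact controllability hypothesis. First I would record that a Floquet factorization $(V,J)$ makes $V$ a solution of the homogeneous differential Sylvester equation $\dot V=AV-VJ$, i.e. of \eqref{syl2} with $\Lambda=J$ and $BG=0$; hence, by the equivalence in Theorem~\ref{sylPP}, the operator $\mathcal{V}:=\mathcal{T}(V)$ obeys $(\mathcal{A}-\mathcal{N})\mathcal{V}=\mathcal{V}(J\otimes\mathcal{I}-\mathcal{N})$. Since $V,V^{-1}\in L^\infty$, Theorems~\ref{borne} and \ref{inverse} guarantee that $\mathcal{V}$ is bounded and boundedly invertible with $\mathcal{V}^{-1}=\mathcal{T}(V^{-1})$, so that $\mathcal{V}^{-1}(\mathcal{A}-\mathcal{N})\mathcal{V}=\mathcal{J}$, where $\mathcal{J}:=J\otimes\mathcal{I}-\mathcal{N}$.

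Next I would substitute $\mathcal{P}=\mathcal{V}\mathcal{W}$ into \eqref{syl1} with the prescribed data $\mathcal{G}=\mathcal{B}^*\mathcal{V}^{*-1}$ and $\Lambda=-J^*-\alpha\,Id_n$. Writing $\tilde{\mathcal{B}}:=\mathcal{V}^{-1}\mathcal{B}=\mathcal{T}(V^{-1}B)$, left-multiplying by $\mathcal{V}^{-1}$, and using $\mathcal{V}^{-1}(\mathcal{A}-\mathcal{N})\mathcal{V}=\mathcal{J}$ together with $\mathcal{N}^*=-\mathcal{N}$, a direct rearrangement collapses \eqref{syl1} to the symmetric Lyapunov equation $\mathcal{M}\mathcal{W}+\mathcal{W}\mathcal{M}^*=\tilde{\mathcal{B}}\tilde{\mathcal{B}}^*$, where $\mathcal{M}:=\mathcal{J}+\tfrac{\alpha}{2}\mathcal{I}$ and the right-hand side $\tilde{\mathcal{B}}\tilde{\mathcal{B}}^*=\mathcal{T}(V^{-1}BB^*V^{*-1})$ is positive semidefinite. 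Taking adjoints of this equation and invoking uniqueness of its solution (inherited from the spectral hypothesis of Theorem~\ref{sylPP}) shows $\mathcal{W}=\mathcal{W}^*$; since $\mathcal{V}$ is invertible, $\mathcal{P}$ is invertible if and only if $\mathcal{W}$ is.

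The heart of the argument is the invertibility of $\mathcal{W}$. The spectral condition $\sigma(\mathcal{A}-\mathcal{N})\cap\sigma(\Lambda\otimes\mathcal{I}-\mathcal{N})=\emptyset$, evaluated at coincident indices, forces $\mathrm{Re}\,\lambda_p+\tfrac{\alpha}{2}\neq0$ for every Floquet exponent $\lambda_p$, so $\mathcal{M}$ carries no spectrum on the imaginary axis. Choosing $\alpha$ so that the spectrum of $\mathcal{M}$ lies entirely in the open left half-plane (that is, $\mathcal{M}$ Hurwitz), I would represent the solution as the convergent integral $\mathcal{W}=-\int_0^{\infty}e^{\mathcal{M}\tau}\tilde{\mathcal{B}}\tilde{\mathcal{B}}^*e^{\mathcal{M}^*\tau}\,d\tau$ and verify it satisfies the Lyapunov equation. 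Using $e^{\mathcal{M}\tau}=e^{\alpha\tau/2}\,\mathcal{V}^{-1}e^{(\mathcal{A}-\mathcal{N})\tau}\mathcal{V}$, the operator $-\mathcal{W}$ is the similarity transform, by the bounded invertible $\mathcal{V}^{-1}$, of a weighted controllability Gramian of $(\mathcal{A}-\mathcal{N},\mathcal{B})$; since the integrand is positive semidefinite and $e^{\alpha\tau}$ is bounded below on any $[0,t]$, the Gramian lower bound characterizing exact controllability yields $\|\mathcal{W}x\|_{\ell^2}\geq c\,\|x\|_{\ell^2}$ for some $c>0$. By Theorem~\ref{inverse}, $\mathcal{W}$, and hence $\mathcal{P}=\mathcal{V}\mathcal{W}$, is invertible, so $K=GP^{-1}$ is well defined.

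I expect the main obstacle to be the unboundedness of the generator $\mathcal{M}$ together with the fact that, for a general admissible $\alpha$, $\mathcal{M}$ is only hyperbolic rather than one-sided, so that $\mathcal{W}$ is genuinely indefinite. Treating this case rigorously requires splitting $\ell^2$ along the bounded spectral-dichotomy projections of $\mathcal{M}$, representing $\mathcal{W}$ through the associated forward and backward integrals, and controlling the off-diagonal blocks via the separation $\mathrm{Re}\,\lambda_p+\mathrm{Re}\,\lambda_q+\alpha\neq0$ before extracting a uniform lower bound. The one-sided choice of $\alpha$ above avoids the dichotomy entirely and already furnishes a complete and rigorous sufficient condition, which is all the theorem requires.
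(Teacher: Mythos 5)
Your proposal is correct in the regime it treats, and in substance it is the same proof as the paper's: pass to Floquet coordinates, write the transformed solution of \eqref{syl1} as an explicit exponential integral, exploit $\mathcal{N}^*=-\mathcal{N}$ together with the choices $\mathcal{G}=\mathcal{B}^*\mathcal{V}^{*-1}$ and $\Lambda=-J^*-\alpha Id_n$ to recognize a weighted controllability Gramian, invoke exact controllability for definiteness, and conclude from $\mathcal{P}=\mathcal{V}\mathcal{W}$. The differences are worth recording. You make the Lyapunov structure explicit, collapsing \eqref{syl1} to $\mathcal{M}\mathcal{W}+\mathcal{W}\mathcal{M}^*=\tilde{\mathcal{B}}\tilde{\mathcal{B}}^*$ with $\mathcal{M}=J\otimes\mathcal{I}-\mathcal{N}+\tfrac{\alpha}{2}\mathcal{I}$, where the paper keeps the Sylvester form and performs the same algebra inside the integrand. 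More substantively, you work on the opposite half-line of $\alpha$: you require $\mathcal{M}$ Hurwitz, so your $\mathcal{W}$ is \emph{minus} a Gramian (negative definite and coercive), while the paper assumes, ``to ease the proof,'' that $J$ is anti-stable and $\Lambda$ is Hurwitz, making its $\mathcal{R}=\mathcal{V}^{-1}\mathcal{P}$ a genuine positive-definite Gramian. These are mirror images; neither argument covers the hyperbolic case where $\sigma(\mathcal{M})$ straddles the imaginary axis --- you at least flag this explicitly, and your variant has the small advantage of placing no assumption on $J$ itself, only on the choice of $\alpha$. On one point your write-up is tighter than the paper's: the paper deduces invertibility of $\mathcal{R}$ from the pointwise statement $<w,\mathcal{R}w>\,>0$ for $w\neq 0$, which does not suffice in infinite dimension; the uniform bound $\geq c\|w\|_{\ell^2}^2$ obtained from the exact-controllability constant $\delta$ (plus a positive lower bound on the weight $e^{\alpha\tau}$ over a finite horizon), which you use, is what actually closes that step. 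Two minor slips on your side: $-\mathcal{W}=\mathcal{V}^{-1}(\cdot)\mathcal{V}^{-*}$ is a congruence, not a similarity (coercivity still transfers because $\mathcal{V}$ is boundedly invertible), and convergence of $\int_0^{\infty}e^{\mathcal{M}\tau}(\cdot)e^{\mathcal{M}^*\tau}d\tau$ rests on the growth bound $\|e^{\mathcal{M}\tau}\|=e^{\alpha\tau/2}\|e^{J\tau}\|$, valid here because $J\otimes\mathcal{I}$ commutes with $\mathcal{N}$ and $e^{-\mathcal{N}\tau}$ is unitary --- a point both you and the paper leave implicit.
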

\begin{proof}
To ease the proof, we assume that $J$ has a spectrum located in the open right half plane and we assume that $\Lambda$ has a spectrum located in the open left half plane. Using the Floquet factorization, and taking $w(t):=V^{-1}x(t)$, system \eqref{ltih} can be rewritten as:
\begin{align}
	\dot W(t)=(J \otimes \mathcal{I}-\mathcal{N})W(t)+\mathcal{V}^{-1}\mathcal{B}U(t), \label{ltih2}
\end{align} where $W:=\mathcal{F}(w)$.
Thus, the goal now is to solve the Sylvester equation
\begin{align}(J \otimes \mathcal{I}-\mathcal{N})\mathcal{R}-\mathcal{R}(\Lambda \otimes \mathcal{I}-\mathcal{N})&=\mathcal{V}^{-1}\mathcal{B}\mathcal{G}\label{syl1bis}
	\end{align}
Following the assumption concerning the spectra of $(J \otimes \mathcal{I}-\mathcal{N})$ and $(\Lambda \otimes \mathcal{I}-\mathcal{N})$, the solution of \eqref{syl1bis} is provided by (see \cite{Bhatia} for more detail):
$$\mathcal{R}:=\int_0^{+\infty} e^{-(J \otimes \mathcal{I}-\mathcal{N})t}\mathcal{V}^{-1}\mathcal{BG} e^{(\Lambda \otimes \mathcal{I}-\mathcal{N})t}dt.$$
For any $w\in \ell^2$ and taking $\mathcal{G}:=\mathcal{B}^*\mathcal{V}^{*-1}$, we have:
\begin{align*}<w,\mathcal{R}w>&=\int_0^{+\infty} w^*e^{-(J \otimes \mathcal{I}-\mathcal{N})t}\mathcal{G}^*\mathcal{G} e^{(\Lambda \otimes \mathcal{I}-\mathcal{N})t}wdt
\end{align*}
As $\mathcal{N}^*=-\mathcal{N}$, we have:
 $$e^{(\Lambda \otimes \mathcal{I}-\mathcal{N})t}=e^{-(J \otimes \mathcal{I}-\mathcal{N})^*t}e^{((J^*+\Lambda)\otimes \mathcal{I})t},$$ and it follows that:
\begin{align*}<w,\mathcal{R}w>=\int_0^{+\infty}&w^*e^{-(J \otimes \mathcal{I}-\mathcal{N})t}\mathcal{G}^*\mathcal{G}e^{-(J \otimes \mathcal{I}-\mathcal{N})^*t}\\&e^{((J^*+\Lambda) \otimes \mathcal{I})t}wdt\end{align*}
Taking $\Lambda:= -J^*-\alpha Id_n$ with $\alpha\in\mathbb{R}$ such that the spectrum $\sigma(\Lambda)$ is located in the open left half plane, we get:
\begin{align*}<w,\mathcal{R}w>=\int_0^{+\infty}&w^*e^{-(J \otimes \mathcal{I}-\mathcal{N})t}\mathcal{G}^*\mathcal{G}e^{-(J \otimes \mathcal{I}-\mathcal{N})^*t}w\ e^{-\alpha t}dt\end{align*}

As the pair $(\mathcal{A}-\mathcal{N},\mathcal{B})$ is assumed to be exactly controllable, the pair 
$((J \otimes \mathcal{I}-\mathcal{N}),\mathcal{G}^*)$ is also exactly controllable. It follows that for any $t>0$
and any $w\neq0$
$$\int_0^t w^*e^{-(J \otimes \mathcal{I}-\mathcal{N})t}\mathcal{G}^*\mathcal{G}e^{-(J \otimes \mathcal{I}-\mathcal{N})^*t}wdt >0$$

%
%
Consequently, $<w,\mathcal{R}w>$ is strictly postive with $\ w\neq 0$ and hence $\mathcal{R}$ is positive definite and invertible.

To conclude that $\mathcal{P}$, solution of \eqref{syl1}, is invertible, it is sufficient to see that 
 $\mathcal{P}$ satisfies
\begin{align*}\mathcal{P}&:=\int_0^{+\infty} e^{-({A}-\mathcal{N})t}\mathcal{BG} e^{(\Lambda \otimes \mathcal{I}-\mathcal{N})t}dt\\
&:=\int_0^{+\infty} \mathcal{V}e^{-(J \otimes \mathcal{I}-\mathcal{N})t}\mathcal{V}^{-1}\mathcal{BG} e^{(\Lambda \otimes \mathcal{I}-\mathcal{N})t}dt\\
&:=\mathcal{V}\mathcal{R}
\end{align*}
Then, $\mathcal{P}$ is invertible as $\mathcal{V}$ and $\mathcal{R}$ are invertible.
\end{proof}


\subsection{Solving harmonic Sylvester equation}
The harmonic sylvester equation \eqref{syl1} can be solved up to an arbitrarily small error using the same approach as in \cite{Pierre2022} for solving harmonic Lyapunov equations. \color{black}

Let us define the product, denoted by $\circ$, of a $n\times n$ block Toeplitz matrix $\mathcal{A}$ with a matrix $B$ as follows:
\begin{equation}B\circ \mathcal{A}:=\left(\begin{array}{cccc} 
				B\otimes \mathcal{A}_{11} & B\otimes \mathcal{A}_{12} & \cdots & B\otimes \mathcal{A}_{1n} \\
				B\otimes \mathcal{A}_{21} & B\otimes \mathcal{A}_{22} & & \vdots \\
				\vdots & & \ddots& \vdots \\
				B\otimes \mathcal{A}_{n1} & \cdots & \cdots & B\otimes \mathcal{A}_{nn}\end{array}\right)\label{symB}\end{equation}

\begin{theorem}\label{sylvester_approx}
	The phasor ${\bf P}:=\mathcal{F}(P)$ of the solution $\mathcal{P}:=\mathcal{T}(P)$ of the infinite-dimensional Sylvester equation (\ref{syl1}) is given by:
	\begin{equation}col({\bf P}):=(Id_n \otimes (\mathcal{A}-\mathcal{ N})-Id_n\circ (\Lambda\otimes \mathcal{I})^*)^{-1}col({\bf Q})\label{inf_sym}\end{equation}
	with $\mathcal{N}$ given by \eqref{N} and where ${\bf Q}:=\mathcal{F}(BG)$.
\end{theorem}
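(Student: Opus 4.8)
The plan is to reduce the infinite-dimensional operator equation \eqref{syl1} to a purely algebraic equation on the phasors of $P$, following the route already used for the harmonic Lyapunov equation in \cite{Pierre2022}. By Theorem~\ref{sylPP}, equation \eqref{syl1} is equivalent to the differential Sylvester equation \eqref{syl2} satisfied by the $T$-periodic matrix $P:=\mathcal{T}^{-1}(\mathcal{P})$. I would therefore work with \eqref{syl2} and pass to the sliding Fourier decomposition $\mathbf{P}:=\mathcal{F}(P)$, whose components are exactly the Fourier coefficients (phasors) $P_k$ that formula \eqref{inf_sym} is meant to compute.

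First I would record the two identities that drive the computation. Since $P$ is $T$-periodic, each phasor $P_k(t)=\frac1T\int_{t-T}^t P(\tau)e^{-j\omega k\tau}d\tau$ is constant in $t$ (the boundary terms cancel because $P(t-T)=P(t)$ and $e^{j\omega kT}=1$) and coincides with the usual Fourier coefficient of $P$; consequently $\mathcal{F}(\dot P)=\mathcal{N}\mathbf{P}$, time differentiation becoming multiplication by $\mathcal{N}$ in the phasor domain. Moreover, applying $\mathcal{F}$ columnwise turns $A(t)P(t)$ and $B(t)G(t)$ into $\mathcal{A}\mathbf{P}$ and $\mathbf{Q}:=\mathcal{F}(BG)$ respectively, while right multiplication by the constant matrix $\Lambda$ commutes with $\mathcal{F}$. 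Taking the phasor of \eqref{syl2} then collapses the differential equation into the algebraic relation $(\mathcal{A}-\mathcal{N})\mathbf{P}-\mathbf{P}\Lambda=\mathbf{Q}$.

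It remains to vectorize this relation. Applying $col$ and the standard identity $col(MX)=(Id_n\otimes M)\,col(X)$ to the left term produces $Id_n\otimes(\mathcal{A}-\mathcal{N})$, which is precisely the first factor in \eqref{inf_sym}. The right-multiplication term $\mathbf{P}\Lambda$ is the delicate one: because $\mathbf{P}$ carries a finite ``state'' index together with the infinite harmonic index, its correct vectorization is not a plain Kronecker product but the block-Toeplitz-aware product $\circ$ of \eqref{symB}, and the conjugate transpose in $(\Lambda\otimes\mathcal{I})^*$ emerges from the right-multiplication identity together with the conjugation built into the phasor representation. Matching the block orderings carefully should yield $col(\mathbf{P}\Lambda)=(Id_n\circ(\Lambda\otimes\mathcal{I})^*)\,col(\mathbf{P})$, so that the vectorized equation reads exactly $(Id_n\otimes(\mathcal{A}-\mathcal{N})-Id_n\circ(\Lambda\otimes\mathcal{I})^*)\,col(\mathbf{P})=col(\mathbf{Q})$.

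Finally I would argue that the bracketed operator is invertible: this is the vectorized counterpart of the unique solvability of \eqref{syl1}, so the disjointness $\sigma(\mathcal{A}-\mathcal{N})\cap\sigma(\Lambda\otimes\mathcal{I}-\mathcal{N})=\emptyset$ assumed in Theorem~\ref{sylPP} prevents $0$ from entering its spectrum and licenses the inverse in \eqref{inf_sym}. I expect the main obstacle to be exactly the bookkeeping of the vectorization step: justifying that the interaction between the finite state dimension and the infinite harmonic dimension is captured precisely by $Id_n\circ(\Lambda\otimes\mathcal{I})^*$ (in particular the appearance of the adjoint rather than a mere transpose), and confirming that the resulting inverse is a genuine bounded operator acting on the $\ell^2$ sequence $col(\mathbf{Q})$ rather than a purely formal expression.
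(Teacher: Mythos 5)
Your overall route coincides with the paper's: the paper's proof of Theorem~\ref{sylvester_approx} is a one-line deferral to Theorem 10 of \cite{Pierre2022}, and the argument there is precisely the one you reconstruct --- pass to the equivalent $T$-periodic differential Sylvester equation \eqref{syl2}, use periodicity to turn $\mathcal{F}(\dot P)$ into $\mathcal{N}{\bf P}$, turn the products into $\mathcal{A}{\bf P}$ and ${\bf Q}:=\mathcal{F}(BG)$, and vectorize. Up to and including the algebraic relation $(\mathcal{A}-\mathcal{N}){\bf P}-{\bf P}\Lambda={\bf Q}$, your steps are correct.

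The gap sits exactly at the step you yourself flag as the ``main obstacle'', and it is not mere bookkeeping: the identity you assert, $col({\bf P}\Lambda)=(Id_n\circ(\Lambda\otimes\mathcal{I})^*)\,col({\bf P})$, is false in general, and the mechanism you invoke to produce the adjoint (``the conjugation built into the phasor representation'') does not exist --- the sliding Fourier decomposition $\mathcal{F}$ is complex-linear, not antilinear, so no conjugation can enter. Concretely, by Definition \eqref{symB} the $(j,l)$ block of $Id_n\circ(\Lambda\otimes\mathcal{I})^*$ is $Id_n\otimes\overline{\Lambda_{lj}}\,\mathcal{I}$, so this operator maps $col({\bf P})$ to the vector whose $j$-th block is $\sum_l \overline{\Lambda_{lj}}\,{\bf p}_l$ (where ${\bf p}_l$ is the $l$-th column of ${\bf P}$), whereas the $j$-th block of $col({\bf P}\Lambda)$ is $\sum_l \Lambda_{lj}\,{\bf p}_l$. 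The correct identity is therefore $col({\bf P}\Lambda)=(Id_n\circ(\Lambda\otimes\mathcal{I})')\,col({\bf P})$, with the transpose; it agrees with the adjoint version only when $\overline{\Lambda}=\Lambda$, i.e. when $\Lambda$ is real. The distinction is substantive here, since the paper's own example takes $\Lambda:=-J^*-Id$ with $J=diag(1\pm j1.64)$ complex: with $^*$ in place of $'$ one solves the Sylvester equation with $\overline{\Lambda}$ instead of $\Lambda$. To close the proof you should prove the vectorization identity with $(\Lambda\otimes\mathcal{I})'$, and either restrict to real $\Lambda$ or read the $^*$ in \eqref{inf_sym} as a transpose. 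The final invertibility claim is then fine as you argue it: a zero spectral value of $Id_n\otimes(\mathcal{A}-\mathcal{N})-Id_n\circ(\Lambda\otimes\mathcal{I})'$ would force $\sigma(\mathcal{A}-\mathcal{N})\cap\sigma(\Lambda\otimes\mathcal{I}-\mathcal{N})\neq\emptyset$, contradicting the hypothesis of Theorem~\ref{sylPP}.
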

\begin{proof}The proof is similar to Theorem 10 in \cite{Pierre2022}.
\end{proof}

In practice, due to the infinite dimension nature of our problem, a truncation is necessary. Define for any given $m$:
\begin{enumerate}
\item the $m-$truncation of the $n \times n$ block Toeplitz matrix $\mathcal{A}$ 
obtained by applying a $m-$truncation of all of its blocks $\mathcal{T}(a_{ij})_m$, $i,j:=1,\cdots, n$ where $\mathcal{T}(a_{ij})_m$ is the $(2m+1) \times (2m+1)$ leading principal submatrices of $\mathcal{T}(a_{ij})$.
\item the $m-$truncation ${\bf Q}|_m$ of ${\bf Q}:=\mathcal{F}(Q)$ is obtained by suppressing all phasors (components) of order $|k|>m.$
\end{enumerate}

We define also the $m-$truncated solution as 
\begin{equation}col({\bf \tilde P}_m):=(Id_n \otimes (\mathcal{A}_m-\mathcal{ N}_m)-Id_n\circ (\Lambda\otimes \mathcal{Id}_m)^*)^{-1}col({\bf Q}|_m)\label{inf_sol_2}\end{equation} where $\mathcal{N}_m:=Id_n\otimes diag(j\omega k,\ |k|\leq m)$. 
The next Theorem shows that an approximated solution to the harmonic Sylvester equation can be always determined up to an arbitrarily small error.
\begin{theorem}\label{approx_P_n} Under assumption of Theorem~\ref{sylPP} and assuming that $A(t)\in L^\infty([0 \ T])$, for any given $\epsilon>0$, there exists $m_0$ such that 
	for any $m\geq m_0$: $$\|col({\bf {P}-\tilde {{P}}}_m)\|_{\ell^2}<\epsilon$$
	where ${\bf P}$ is the solution of the infinite-dimensional problem \eqref{inf_sym}.
	Moreover, $$\|\mathcal{P}-\tilde{\mathcal{P}}_m\|_{\ell^2}<\epsilon$$
	with $\mathcal{P}:=\mathcal{T}(P)$ and $\tilde{\mathcal{P}}_m:=\mathcal{T}(\tilde P_m)$.
\end{theorem}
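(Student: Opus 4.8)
The plan is to read \eqref{inf_sol_2} as the finite-section (Galerkin) compression of the infinite linear system \eqref{inf_sym} and to prove convergence by combining the decay of the exact phasors with a uniform invertibility (stability) bound for the truncated matrices. Write $\mathcal{M}:=Id_n\otimes(\mathcal{A}-\mathcal{N})-Id_n\circ(\Lambda\otimes\mathcal{I})^*$ for the infinite operator acting on $col(\mathbf{P})$, write $\mathcal{M}_m$ for the finite matrix in \eqref{inf_sol_2}, and let $\Pi_m$ be the orthogonal projection suppressing all phasors of order $|k|>m$. Since $\mathcal{A}_m,\mathcal{N}_m$ are the leading principal sections of $\mathcal{A},\mathcal{N}$ and the truncation of the right-hand side is $col(\mathbf{Q}|_m)=\Pi_m\,col(\mathbf{Q})$, one checks that $\mathcal{M}_m$ is exactly the compression $\Pi_m\mathcal{M}\Pi_m$ restricted to the range of $\Pi_m$.

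First I would record that $col(\mathbf{P})\in\ell^2$: since $P\in L^\infty([0\ T])\subset L^2([0\ T])$, Parseval gives $\sum_k\|P_k\|^2<\infty$, so the tail $r_m:=(\mathcal{I}-\Pi_m)\,col(\mathbf{P})$ obeys $\|r_m\|_{\ell^2}\to0$. Next I would derive the error identity: applying $\Pi_m$ to $\mathcal{M}\,col(\mathbf{P})=col(\mathbf{Q})$ and subtracting $\mathcal{M}_m\,col(\tilde{\mathbf{P}}_m)=col(\mathbf{Q}|_m)$ yields
\begin{equation*}
\mathcal{M}_m\big(\Pi_m\,col(\mathbf{P})-col(\tilde{\mathbf{P}}_m)\big)=-\,\Pi_m\mathcal{M}(\mathcal{I}-\Pi_m)\,col(\mathbf{P}).
\end{equation*}
The key observation is that the only unbounded part of $\mathcal{M}$, namely $Id_n\otimes\mathcal{N}$, is diagonal in the phasor index $k$, so $\Pi_m(Id_n\otimes\mathcal{N})(\mathcal{I}-\Pi_m)=0$; hence the right-hand side equals $-\Pi_m\mathcal{M}_{\mathrm{b}}\,r_m$ with $\mathcal{M}_{\mathrm{b}}:=Id_n\otimes\mathcal{A}-Id_n\circ(\Lambda\otimes\mathcal{I})^*$ bounded (Theorem~\ref{borne}, using $A\in L^\infty$), and its norm is $\le\|\mathcal{M}_{\mathrm{b}}\|\,\|r_m\|_{\ell^2}\to0$.

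The main obstacle, and the core of the argument, is the stability estimate $\sup_{m\ge m_0}\|\mathcal{M}_m^{-1}\|<\infty$. Here the separation hypothesis $\sigma(\mathcal{A}-\mathcal{N})\cap\sigma(\Lambda\otimes\mathcal{I}-\mathcal{N})=\emptyset$ is essential: on the imaginary lattice $j\omega\mathbb{Z}$ it forces the spectrum of $\mathcal{M}$ to stay a fixed distance $\eta>0$ from $0$, so $\mathcal{M}^{-1}$ is bounded. To transfer this to the sections I would use that the growing diagonal $Id_n\otimes\mathcal{N}$ dominates the bounded blocks $\mathcal{A}$ and $\Lambda$ for $|k|$ large, so $\mathcal{M}_m$ is diagonally dominant outside a fixed central window and only the central block --- controlled by the invertibility of $\mathcal{M}$ once $m$ is large --- can spoil the conditioning; this is exactly the finite-section analysis performed for the Lyapunov equation in \cite{Pierre2022}. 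Granting $\|\mathcal{M}_m^{-1}\|\le C$, the identity above gives $\|\Pi_m\,col(\mathbf{P})-col(\tilde{\mathbf{P}}_m)\|_{\ell^2}\le C\|\mathcal{M}_{\mathrm{b}}\|\,\|r_m\|_{\ell^2}$, and a triangle inequality with $\|r_m\|_{\ell^2}\to0$ delivers the first claim $\|col(\mathbf{P}-\tilde{\mathbf{P}}_m)\|_{\ell^2}\to0$.

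Finally, the operator-norm statement I would obtain through Theorem~\ref{borne}: because $\mathcal{P}-\tilde{\mathcal{P}}_m=\mathcal{T}(P-\tilde P_m)$, we have $\|\mathcal{P}-\tilde{\mathcal{P}}_m\|_{\ell^2}=\|P-\tilde P_m\|_{L^\infty}$, so it suffices to upgrade the $L^2$ (i.e. $\ell^2$-phasor) convergence to uniform convergence of the symbols. For this I would invoke the extra regularity of $P$: equation \eqref{syl2} together with $A,B,G\in L^\infty$ forces $\dot P\in L^\infty$, hence a weighted decay of the phasors $P_k$ that bounds $\|P-\tilde P_m\|_{L^\infty}$ by a weighted version of the phasor error. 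I expect the stability bound of the third paragraph to be the genuinely hard step, the rest being routine bookkeeping.
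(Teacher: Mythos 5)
The paper offers no in-house argument to compare against: its entire proof is the citation ``see Theorem 11 in \cite{Pierre2022}''. Your finite-section skeleton --- reading \eqref{inf_sol_2} as the compression $\Pi_m\mathcal{M}\Pi_m$, exploiting that the only unbounded part $Id_n\otimes\mathcal{N}$ is diagonal in the phasor index so the consistency error $-\Pi_m\mathcal{M}_{\mathrm{b}}r_m$ involves only the bounded part acting on the tail, then closing with a uniform bound on $\mathcal{M}_m^{-1}$ --- is exactly the architecture of the analysis that reference performs for the harmonic Lyapunov equation, so you are on the paper's own route, and your error identity is correct.

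However, as a standalone proof there is one genuine gap, and you name it yourself but do not close it: the stability estimate $\sup_{m\ge m_0}\|\mathcal{M}_m^{-1}\|\le C$ \emph{is} the theorem; everything else is bookkeeping. Granting it by citing \cite{Pierre2022} puts you in the same position as the paper, but the heuristic you offer in its place would not survive scrutiny. Row-wise diagonal dominance is not available for $A\in L^\infty$: the Fourier coefficients of an $L^\infty$ symbol need not be absolutely summable (the paper's own example has square and sawtooth entries with coefficients decaying like $1/k$), so the off-diagonal row sums of the Toeplitz blocks can diverge. More fundamentally, compressions of an invertible operator are not automatically uniformly invertible --- for purely Toeplitz operators the finite-section method requires an extra hypothesis (invertibility of the reflected-symbol operator), the obstruction being near-kernel vectors concentrating at the truncation corners. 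What actually saves the day here is the unbounded diagonal: any unit vector $v_m$ with $\|\mathcal{M}_m v_m\|\to 0$ must have all but $O(1/k_0)$ of its mass in a fixed window $|k|\le k_0$ (else the diagonal entries $\approx \omega|k|$ force $\|\mathcal{M}_m v_m\|$ to stay large), and a compactness/weak-limit argument then produces a nonzero $v$ with $\mathcal{M}v=0$, contradicting the invertibility guaranteed by the spectral separation hypothesis. That concentration argument, not diagonal dominance, is the missing core. A smaller gap of the same kind sits in your last paragraph: $\dot P\in L^\infty$ controls only the tail $P-\Pi_m P$, while the in-window discrepancy $\Pi_m P-\tilde P_m$ costs a factor $\sqrt{n(2m+1)}$ when passing from $\ell^2$ to $L^\infty$; you need a weighted error bound, which does follow from the stability bound via the identity $\mathcal{D}_m\mathcal{M}_m^{-1}=\mathcal{I}_m-\mathcal{B}_m\mathcal{M}_m^{-1}$ (so the weighted norm of the error is also $O(\|r_m\|)$), but this step should be written out rather than called routine.
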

\begin{proof}see Theorem 11 in \cite{Pierre2022}.
\end{proof}

\section{Illustrative example}
\subsection{Periodic trajectory tracking design}
Consider a LTP system defined by
\begin{align}
	\dot x=&\left(\begin{array}{cc}a_{11} (t) & a_{12} (t) \\a_{21} (t) & a_{22} (t)\end{array}\right)x+\left(\begin{array}{c}b_{11}(t) \\0\end{array}\right)u\label{ex_ltp}\end{align}
where {\small\begin{align*}a_{11} (t) &:=1+\frac{4}{\pi}\sum_{k=0}^{\infty}\frac{1}{2k+1}\sin(\omega (2k+1)t),\\
	a_{12} (t) &:= 2+\frac{16}{\pi^2}\sum_{k=0}^{\infty}\frac{1}{(2k+1)^2}\cos(\omega (2k+1)t),\\
	a_{21} (t) &:= -1+\frac{2}{\pi}\sum_{k=1}^{\infty}\frac{(-1)^k}{k}\sin(\omega kt+\frac{\pi}{4}),\\
	a_{22} (t) &:= 1-2\sin(2\pi t)-2\sin(6\pi t)+2\cos(6\pi t)+2\cos(10\pi t),\\
	b_{11}(t)&:=1+ 2 \cos(2\omega t)+ 4 \sin(6\omega t) \text{ with }\omega:=2\pi.
\end{align*}}
Observe that $a_{11}$, $a_{12}$ and $a_{21}$ are respectively square, triangular and sawtooth signals and include an offset part. The associated Toeplitz matrix has an infinite number of phasors and is not banded. 

We first compute the Floquet factorization $(V(t),J)$ (see \cite{Pierre2022}). We find for $J$ a diagonal matrix defined by two complex conjugate eigenvalues $\{1\pm j 1.64\}$ and the component of the $T-$periodic, absolutely continuous and complex valued matrix function $V(t)$ is shown on Fig. \ref{fig5}.
Thus, system \eqref{ex_ltp} can be rewitten as:
$$\dot z=J z+V(t)^{-1}B(t)u$$
and one can see that this LTP system is unstable. 
Moreover, the spectrum of $\mathcal{A}-\mathcal{N}$ is given by $\sigma:=\{\lambda_p+j\omega k: k\in \mathbb{Z}, p:=1,2\}$ with $\lambda_{1,2}:=\{1\pm j 1.64\}$ (see \cite{Pierre2022}).
\begin{figure}[h]\begin{center}
		\includegraphics[width=0.8\linewidth]{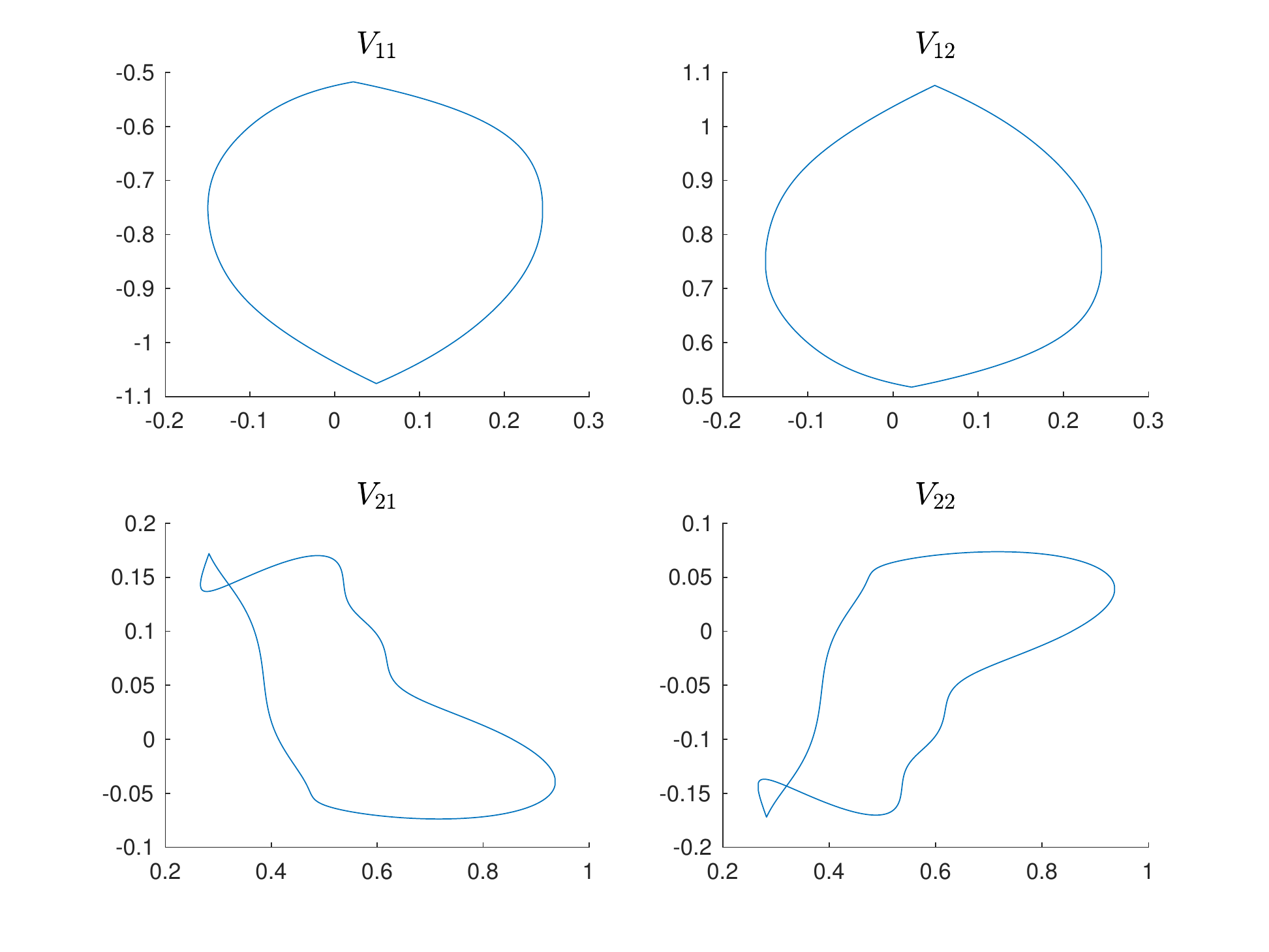}
		\caption{Complex components $V_{ij}(t)$, $i,j=1,2$ of $V(t)$ for $t=[0 \ T]$. }\label{fig5}
	\end{center}
\end{figure}

Using Theorem \ref{suf}, we set $G(t):=B(t)^*V(t)^{*-1}$ and fix the poles of the closed loop by choosing $\Lambda:=-J^*-Id$.
We solve the Harmonic Sylvester equation by computing ${\bf \tilde P}_m$ (see \eqref{inf_sol_2}) for $m:= 4,6,8 ,10$.
Fig.\ref{fig8} shows the components of complex valued $T-$periodic and absolutely continuous matrix function ${\tilde P_m}:=\mathcal{F}^{-1}({\bf \tilde P}_m)$ while Fig. \ref{fig3} shows the phasor modulus of each entries of the matrix ${\bf \tilde P}_m
$ with respect to $m$. 
We see that significant values are obtained for $m\leq 8$.
\begin{figure}[h]\begin{center}
		\includegraphics[width=0.9\linewidth]{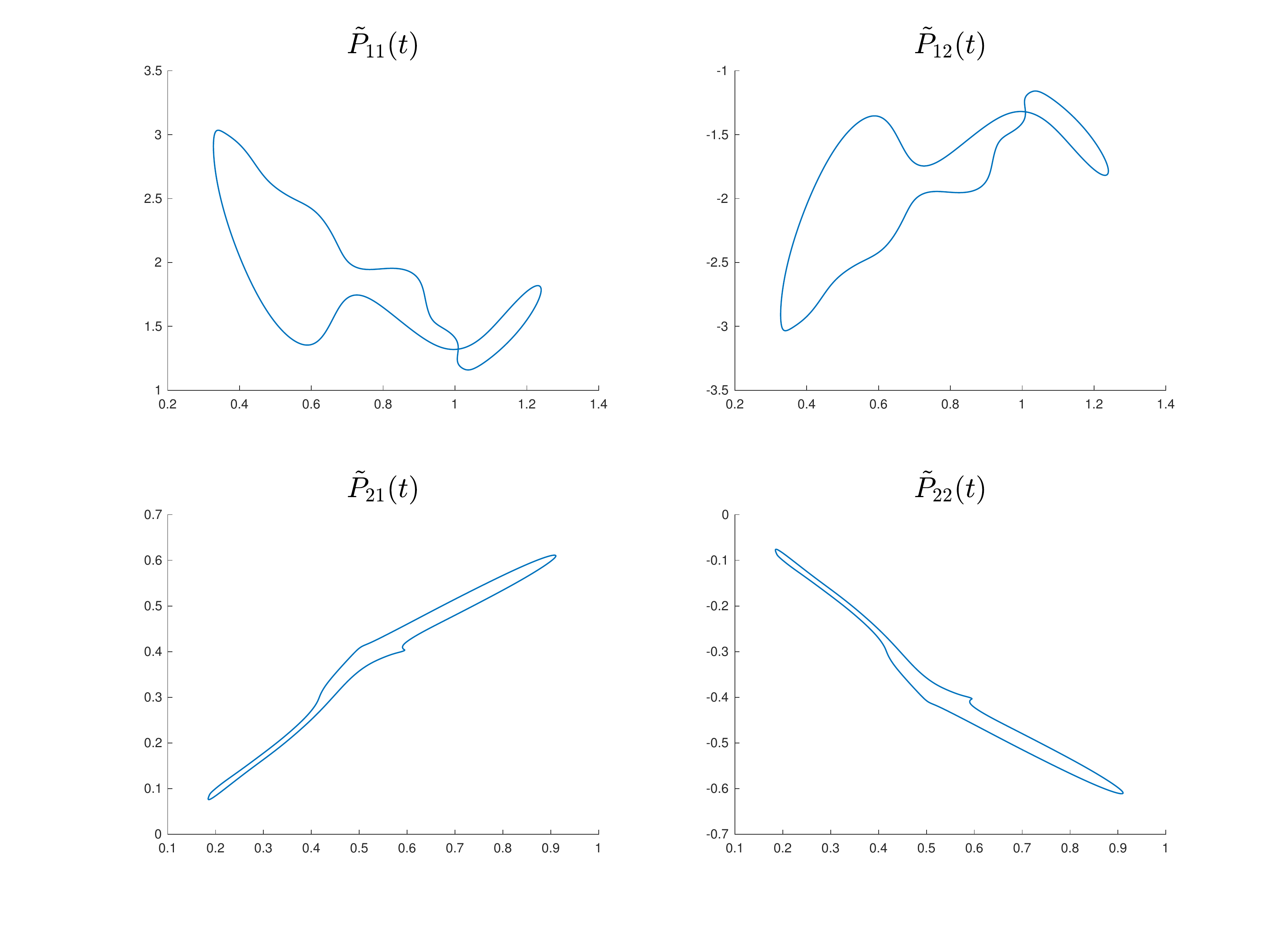}
		\caption{Complex components $\tilde P_{ij}(t)$, $i,j=1,2$ of $\tilde P_m(t)$ for $t=[0 \ T]$ and $m=10$. }\label{fig8}
	\end{center}
\end{figure}
\begin{figure}[h]\begin{center}
		\includegraphics[width=0.9\linewidth]{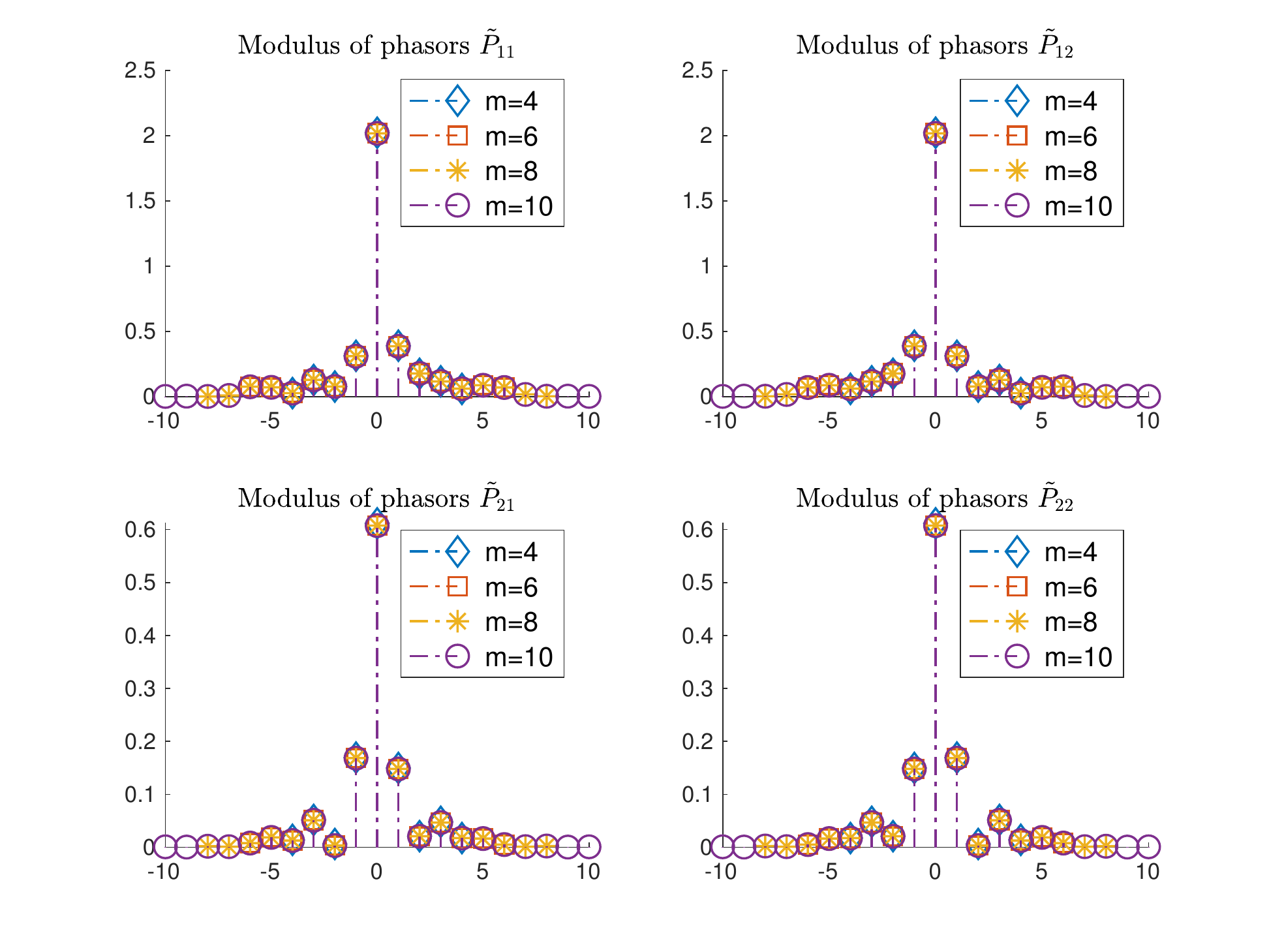}
		\caption{$|{\bf \tilde P}_m|$ for $m:= 4,6,8,10$.}\label{fig3}
	\end{center}
\end{figure}
We deduce the T-periodic and real gain matrix $K(t):=G(t){\tilde P}_m^{-1}(t)$ for $m:= 4,6,8 ,10$ as shown on Fig. \ref{fig7} for $m=10$.
\begin{figure}[h]\begin{center}
		\includegraphics[width=0.8\linewidth,height=6cm]{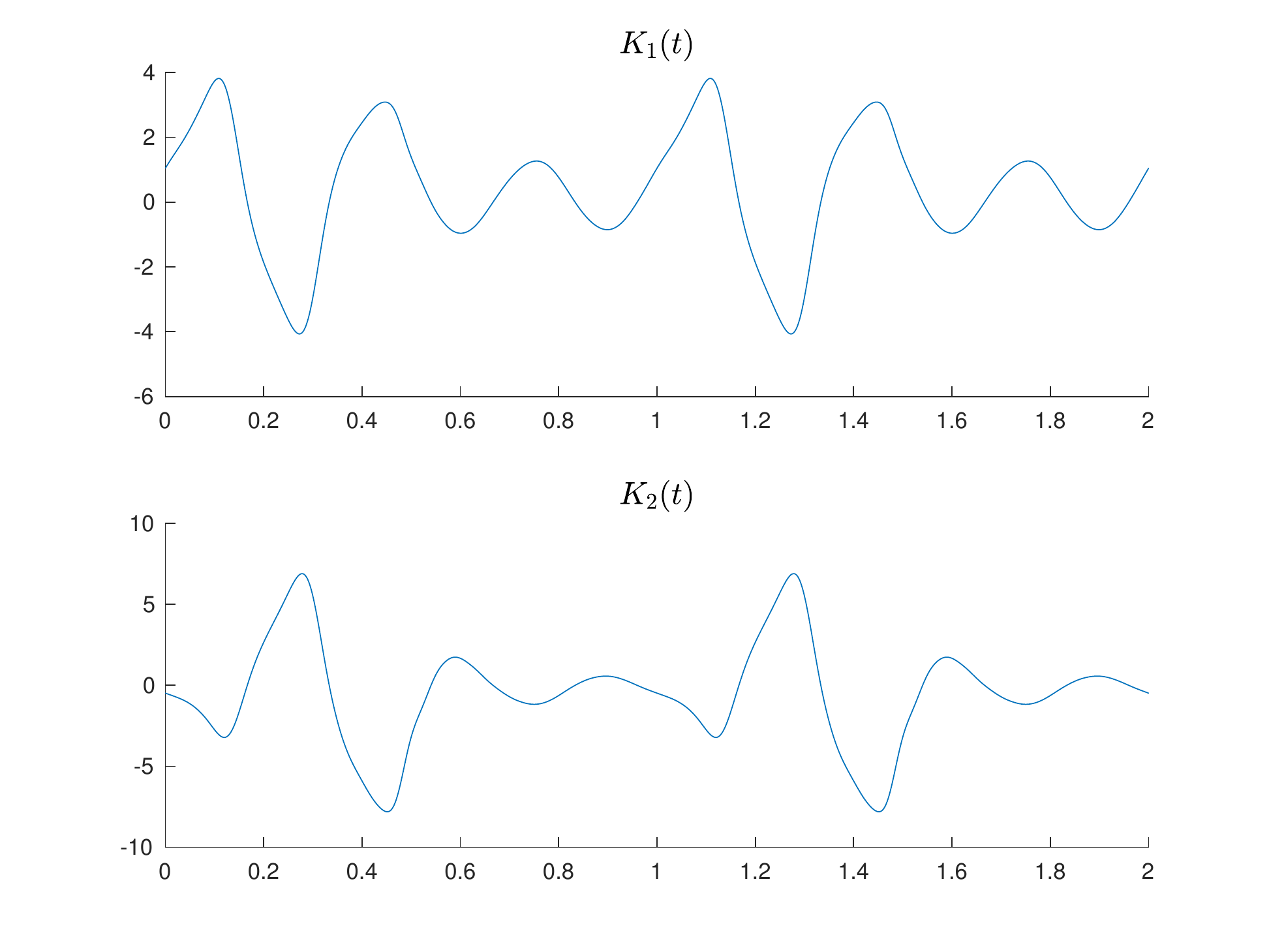}
		\caption{$K(t)=[K_1(t)\ K_2(t)]=G(t){\tilde P}_m^{-1}(t)$ for $m=10$, $t\in[0\ 2T]$.}\label{fig7}
	\end{center}
\end{figure}



%
\begin{figure}[h]\begin{center}
		\includegraphics[width=0.9\linewidth]{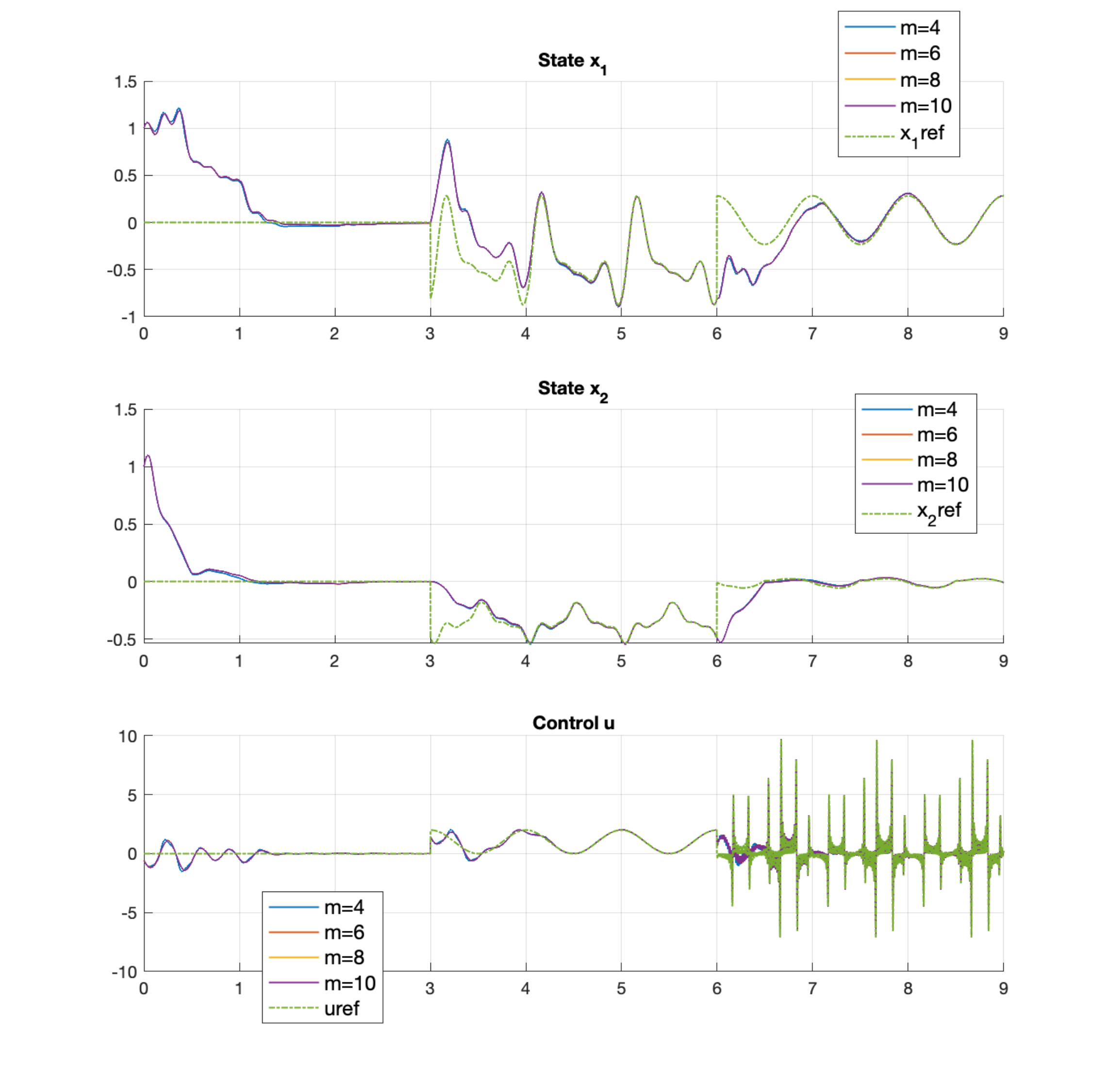}
		\caption{Closed loop response with $u(t):=-K(t)(x(t)-x_{ref}(t))+u_{ref}(t)$ and for $K(t)=\sum_{k=-{m}}^{m} K_k e^{j\omega kt}$ and $m:=4,6,8,10$ and $\Lambda:=-J^*-Id$ and $ J:=diag(1\pm j 1.64)$.}\label{fig2}
	\end{center}
\end{figure}


Now, it is straightforward to show that the control $$u(t):=-K(t)(x(t)-x_{ref}(t))+u_{ref}(t),$$ stabilizes globally and asymptotically the unstable LTP system \eqref{ex_ltp} on any $T-$periodic trajectory $x_{ref}(t):=\mathcal{F}^{-1}(X_{ref})$ and $u_{ref}(t):=\mathcal{F}^{-1}(U_{ref})$ where the pair $(X_{ref},U_{ref})$ satisfies the harmonic equilibrium 
equation
\begin{align}0=(\mathcal{A}-\mathcal{N})X_{ref}+\mathcal{B}U_{ref}.\label{equi}\end{align}
To illustrate this, we plot on Fig.~\ref{fig2} the closed loop response for three $T-$periodic reference trajectories $(x_{ref},u_{ref})$. We start by $u_{ref}(t):=0$, for $t<3$, then $u_{ref}:=1+\cos(2\pi t)$ for $3\leq t< 6$ and for $t\geq 6$ we consider a desired steady state $X_d$ given by $\mathcal{F}^{-1}(X_d)(t):=(\frac{1}{4}\cos(2\pi t),0)$ and look for the nearest harmonic equilibrium, solution of the minimization problem $\min_{U_{ref}}\|X_d-X_{ref}\|^2$ subject to \eqref{equi}. Clearly it can be observed on Fig.~\ref{fig2} that the provided state feedback allows to track any $T-$periodic trajectory corresponding to any equilibrium of \eqref{equi} even if a relative small number $m$ of involved harmonic are considered. 

Now, it can be noticing that Theorem~\ref{suf} provides only a sufficient condition for the invertibility of $\mathcal{P}$. In fact, this invertibility can be obtained for many other pole placements. For example, if we set $\Lambda :=diag(-10,-12)$ instead of $\Lambda:=-J^*-\alpha Id$, from $\tilde P_m(t):=\mathcal{F}^{-1}({\bf \tilde P}_m)=\sum_{k=-{m}}^{m} {\bf\tilde P}_k e^{j\omega kt}$ which is a $T-$periodic and continuous function, it is easy to check that 
$|\det {\tilde P}_m(t)|>0$ for any $t\in[0\ 1]$ $(T=1)$ and thus $\tilde P_m(t)$ is invertible. The corresponding results depicted in Fig.\ref{fig6} illustrate the improvement of the transient.
\begin{figure}[h]\begin{center}
		\includegraphics[width=0.9\linewidth]{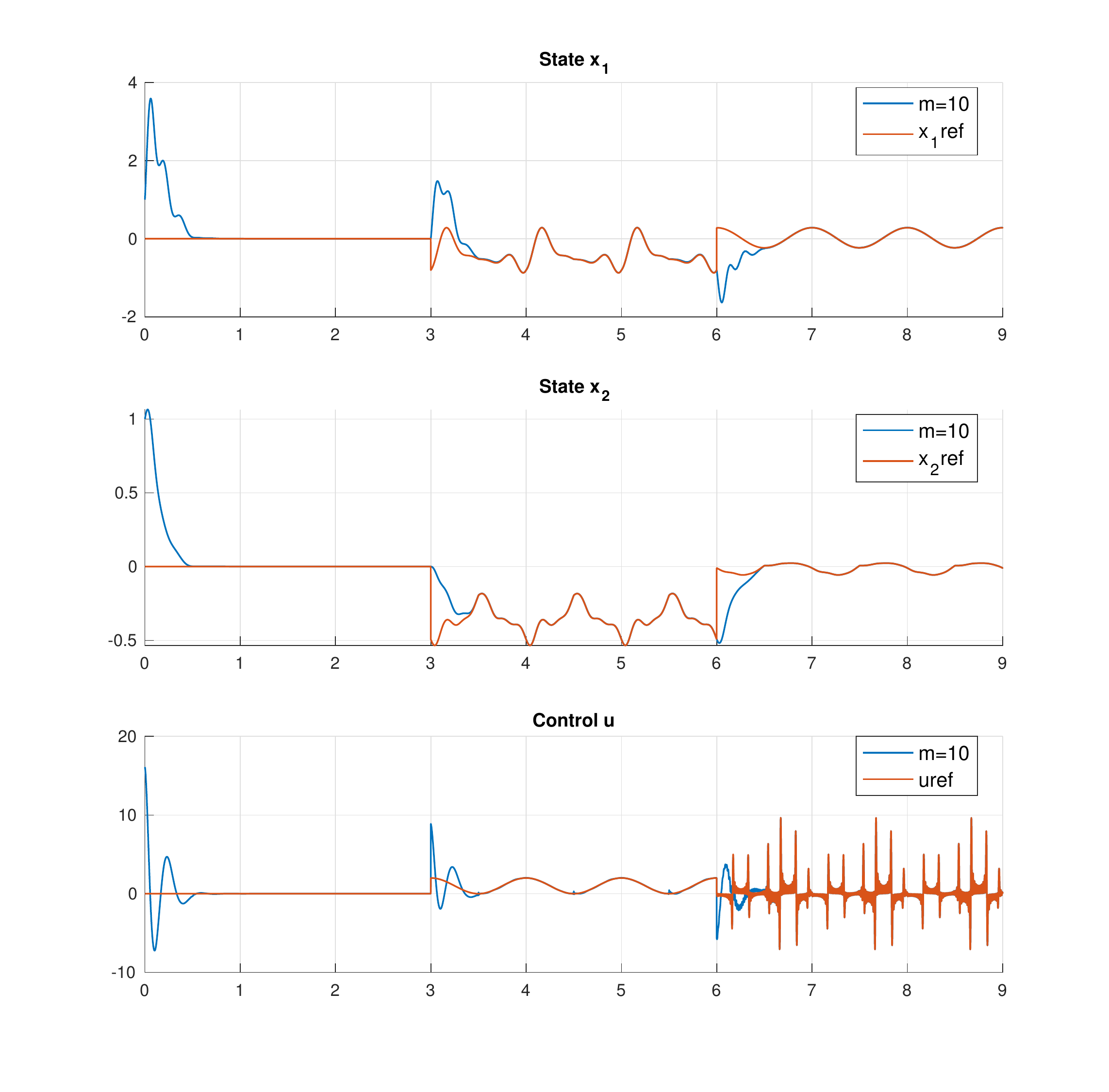}
		\caption{Closed loop response with $u(t):=-K(t)(x(t)-x_{ref}(t))+u_{ref}(t)$ for $K(t)=\sum_{k=-{m}}^{m} K_k e^{j\omega kt}$, $m:=10$ and $\Lambda:=diag(-10,-12)$.}\label{fig6}
	\end{center}
\end{figure}

\subsection{Open question}
Now, we return to the question related to the invertibility of $\mathcal{P}$ as discussed in Remak~\ref{inv} and in Theorem~\ref{suf}. A counter-example is now given to show that the observability condition used in the finite dimension case is not sufficient to enforce invertibility of $\mathcal{P}$.
Let $G(t):= [1\ 1]$ and fix the poles of the closed loop to $\Lambda:=diag(-5,-7)$. Then, it is obvious that the constant pair $(G,\Lambda)$ is observable.
Moreover, as $\mathcal{G}:=[\mathcal{I}\ \mathcal{I}]$, the pair $(\mathcal{G},(\Lambda \otimes \mathcal{I}-\mathcal{N}))$ is exactly observable since it satisfies
 at time $t$, $$\exists \delta>0:\forall x\in \ell^2,\int_0^t\|\mathcal{G}e^{(\Lambda \otimes \mathcal{I}-\mathcal{N}))\tau}x\|_{\ell^2}^2 d\tau \geq \delta \| x \|_{\ell^2}^2.$$
Indeed, as the $k-$th component of $\mathcal{G}e^{(\Lambda \otimes \mathcal{I}-\mathcal{N}))\tau}x$ is given by $x_{1,k}e^{(\lambda_1+j\omega k)t}+x_{2,k}e^{(\lambda_2+j\omega k)t}$, the only solution $(x_{1,k},x_{2,k})$ that provides $x_{1,k}e^{(\lambda_1+j\omega k)t}+x_{2,k}e^{(\lambda_2+j\omega k)t}=0$ on any time intervall $[0\ t]$ is $(x_{1,k},x_{2,k}):=(0,0)$ since $\lambda_1\neq \lambda_2$.

Now, if we compute $\mathcal{P}$ using $\eqref{inf_sol_2}$, the determinant $\det(P(t))$ of the continuous function $P(t)$ vanishes for many $t\in[0\ 1]$ and Theorem~\ref{inverse} implies that $P(t)$ and $\mathcal{P}$ are not invertible. In fact, $P(t)$ is almost everywhere invertible as shown in Fig.~\ref{fig4}.
The non invertibility of $P$ can be explained as follows.
If we replace $K(t)$ by $G(t)P^{-1}(t)$ in the differential Sylvester equation
$$\dot P^{-1}(t)=-P^{-1}(t)(A(t)-B(t)K(t))+\Lambda P^{-1}(t)\ a.e.$$
we obtain the following quadratic differential equation
\begin{equation}\dot P^{-1}(t)=-P^{-1}(t)A(t)+\Lambda P^{-1}(t)+P^{-1}(t)B(t)G(t)P^{-1}(t)\label{ric}\end{equation}
and integrating this non linear equation leads to a finite escape time. As a consequence, $P^{-1}(t)$ and thus $K(t)$ are not $L^\infty$.
This example leaves open the question related to how to choose $G(t)$ in order to guarantee the existence of a $T-$periodic solution of \eqref{ric}.

\begin{figure}[h]\begin{center}
		\includegraphics[width=0.9\linewidth,height=5cm]{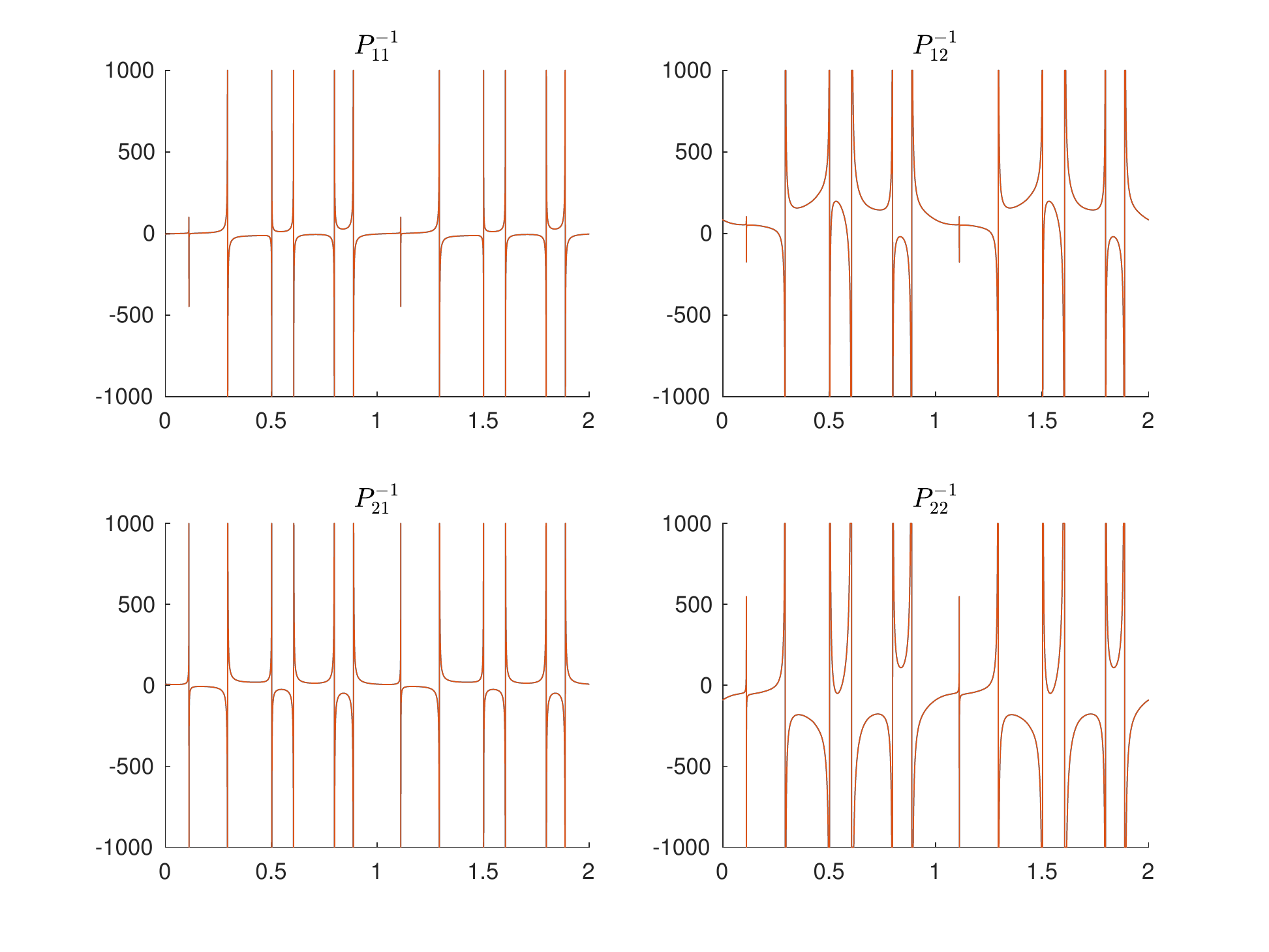}
		\caption{Plot of $P^{-1}(t)$: Counter example where $P(t)$ is almost everywhere invertible and where $\mathcal{P}$ is not invertible despite that the observability condition is satisfied.}\label{fig4}
	\end{center}
\end{figure}

\section{Conclusion}
A harmonic pole placement procedure has been proposed. It allows to design control laws with performance features to stabilize LTP systems on any periodic trajectory satisfying the harmonic equilibrium equation. Provided that the solution of the Sylvester harmonic equation is invertible, the resulting closed-loop system has an LTI representation with a specified pole placement. 
A sufficient condition is given to ensure this invertibility. From a practical point of view, the infinite-dimensional Sylvester equation is solved up to an arbitrarily small error.
Finally, an example is given to show both the features of the proposed procedure and the fact that unlike the finite dimensional case, an observability condition is not sufficient to ensure the invertibility of the solution of the Sylvester equation.

\end{document}